\newcommand{\GG}{\mathcal{G}}
\newcommand{\VV}{\mathcal{V}}
\newcommand{\EE}{\mathcal{E}}
\newcommand{\TT}{\mathcal{T}}
\newcommand{\XX}{\mathcal{X}}
\newcommand{\YY}{\mathcal{Y}}
\newcommand{\bin}{\text{Bin}}
\newcommand{\ceil}[1]{\lceil #1 \rceil}
\DeclareMathOperator{\rank}{rank}
\newtheoremstyle{ieeestyle}
  {3pt}
  {3pt}
  {\normalfont}
  {}
  {\bfseries}
  {.}
  { }
  {}
\theoremstyle{ieeestyle}
\newtheorem{thm}{Theorem}
\newtheorem{lem}{Lemma}
\newtheorem{prop}{Proposition}
\newtheorem{defn}{Definition}
\newtheorem{rmk}{Remark}
\newtheorem{exam}{Example}
\newtheorem{colry}{Corollary}
\def\BibTeX{{\rm B\kern-.05em{\sc i\kern-.025em b}\kern-.08em
    T\kern-.1667em\lower.7ex\hbox{E}\kern-.125emX}}
\begin{document}

\title{A discrete Benamou-Brenier formulation of Optimal Transport on graphs\\

\thanks{This research was supported by the EPSRC Centre for Doctoral Training in Computational Statistics and Data Science (COMPASS) – funded by the Engineering and Physical Sciences Research Council [grant number EP/S023569/1].}
}

\author{\IEEEauthorblockN{ Kieran Morris }
\IEEEauthorblockA{\textit{School of Mathematics} \\
\textit{University of Bristol}\\
Bristol, UK \\
k.morris@bristol.ac.uk}
\and
\IEEEauthorblockN{ Oliver Johnson}
\IEEEauthorblockA{\textit{School of Mathematics} \\
\textit{University of Bristol}\\
Bristol, UK \\
O.Johnson@bristol.ac.uk}

}

\maketitle

\begin{abstract}
We propose a discrete transport equation on graphs. We then derive a discrete analogue of the Benamou-Brenier formulation for Wasserstein-$1$ distance on a graph and as a result classify all constant speed $W_1$ geodesics on graphs.
\end{abstract}

\begin{IEEEkeywords}
graph, network, optimal transport, Wasserstein, incidence matrix, Beckmann, Benamou-Brenier, transport equation, geodesics, discrete divergence.
\end{IEEEkeywords}

\section{Introduction}
The classical transport problem considers two measures $\mu$ and $\nu$ on spaces $\XX$ and $\YY$, where we have cost (and in our case a metric) of transporting $x \in \XX$ to $y \in \YY$, which we denote $d(x,y)$. Then we ask: how can we transport $\mu$ to $\nu$ while minimising the expected cost? This gives us a measure of discrepancy between $\mu$ and $\nu$ relative to $d(x,y)$. The Kantorovich formulation of Optimal Transport \cite{kantorovitch1958translocation} considers the minimum transport cost over couplings of $\mu$ and $\nu$.
\begin{defn}[Kantorovich Formulation of optimal transport]\label{defn: kant}
    Let $\mu,\nu$ be a measures on $\XX$ and $d$ a metric on $\XX$, we define the set of couplings of $\mu$ and $\nu$ to be $
    \Pi(\mu,\nu) := \{ \pi \in \mathcal{P}(\XX^2) : \pi(A \times \XX) = \mu(A), \pi(\XX \times B) = \nu(B) \}$, and the Wasserstein $p$ distance by:
    \begin{equation}\label{eq: kant}
        W_p(\mu,\nu)^p = \inf_{\pi \in \Pi(\mu,\nu)} \left(\int_{\XX^2} d(x,y)^p d\pi\right).
    \end{equation}
\end{defn}
It is known that $W_p$ is a metric on the space of (finite $p$'th moment) measures on $\XX$, see \cite[Definition 6.4]{villani2008optimal}. The Wasserstein distance is popular in machine learning circles, namely as a loss function \cite{arjovsky2017wasserstein,frogner2015learning,ozair2019wasserstein} and utilising its Riemannian structure for Wasserstein Gradient Flows \cite{jordan1998variational,mokrov2021large}. One notable development in optimal transport came via Benamou and Brenier \cite{benamou2000computational}, who re-parametrised the minimisation in terms of time-dependent distribution and velocity field pairs $(f_t,v_t)$.
\begin{prop}[Benamou-Brenier Formulation]\label{prop: benamou}
    Given two distributions $f(0),f(1) \in \mathcal{P}(\mathbb{R}^d)$, we can then express
    \begin{equation}\label{eq: benamou}
        W_p(f(0),f(1)) = \inf_{(f_t,v_t)} \left(\int_0^1 \int ||v_t||^p df_t dt \right)^{\frac{1}{p}}
    \end{equation}
     where $(f_t,v_t)$ are constrained by the transport equation:
    \begin{equation}\label{eq: continuity}
        \partial_t f_t + \nabla \cdot (f_t v_t) = 0,
    \end{equation}
    where $\nabla \cdot$ is the divergence operator, (see \cite{benamou2000computational}).
\end{prop}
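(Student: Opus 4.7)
The plan is to establish the two inequalities separately, showing that the infimum of the action functional over admissible pairs $(f_t, v_t)$ both upper- and lower-bounds $W_p(f(0), f(1))^p$.

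First I would handle the easier direction, that the action infimum is at most $W_p^p$. Take an optimal coupling $\pi^\ast \in \Pi(f(0), f(1))$ realising $W_p$, and define the displacement interpolation $f_t = (e_t)_\# \pi^\ast$ where $e_t(x,y) = (1-t)x + ty$. I would then construct a velocity field $v_t$ supported on $\mathrm{supp}(f_t)$ by assigning velocity $y-x$ to each mass element located at $e_t(x,y)$ (formalised via disintegration of $\pi^\ast$). A short calculation, testing against smooth compactly supported functions and using the change of variables induced by $e_t$, shows that $(f_t,v_t)$ satisfies the continuity equation \eqref{eq: continuity} in the distributional sense. Evaluating the action then gives $\int_0^1 \int \|v_t\|^p df_t dt = \int \|y-x\|^p d\pi^\ast = W_p^p$.

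For the reverse inequality I would use the characteristic flow associated with $v_t$. Given an admissible pair $(f_t,v_t)$, define the flow $\Phi_t$ by $\dot \Phi_t(x) = v_t(\Phi_t(x))$ with $\Phi_0 = \mathrm{Id}$. The continuity equation guarantees $(\Phi_t)_\# f(0) = f_t$, so in particular $(\mathrm{Id},\Phi_1)_\# f(0)$ is a coupling of $f(0)$ and $f(1)$. Hence
\begin{equation*}
W_p^p \le \int \|\Phi_1(x) - x\|^p df(0)(x) = \int \Big\| \int_0^1 v_t(\Phi_t(x)) dt \Big\|^p df(0)(x).
\end{equation*}
Jensen's inequality applied to the inner time integral followed by the pushforward identity $(\Phi_t)_\# f(0) = f_t$ yields
\begin{equation*}
W_p^p \le \int_0^1 \int \|v_t(\Phi_t(x))\|^p df(0)(x) dt = \int_0^1 \int \|v_t\|^p df_t dt,
\end{equation*}
and taking the infimum over admissible pairs and the $p$-th root completes the argument.

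The main obstacle is rigorously defining the flow $\Phi_t$, since a priori $v_t$ is only as regular as the finite action functional requires (essentially $L^p(dt \otimes df_t)$), which is well below the Lipschitz regularity needed for classical Cauchy--Lipschitz theory. The cleanest remedy is to invoke Ambrosio's superposition principle, which replaces the pointwise flow by a probability measure $\boldsymbol{\eta}$ on absolutely continuous paths satisfying $\dot\gamma_t = v_t(\gamma_t)$ and having $f_t$ as its time-$t$ marginal; the coupling $(e_0,e_1)_\# \boldsymbol{\eta}$ then plays the role of $(\mathrm{Id},\Phi_1)_\# f(0)$ and the rest of the estimate is unchanged. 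Once this measurable-selection step is in place, both directions reduce to Jensen's inequality and elementary changes of variables.
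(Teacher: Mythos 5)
The paper does not actually prove this proposition: it is quoted as a known background result with a pointer to Benamou and Brenier, so there is no in-paper argument to compare against. Your sketch is a correct outline of the standard modern proof, and it is worth noting that it is closer in spirit to the treatments via displacement interpolation and the superposition principle (Ambrosio--Gigli--Savar\'e, Santambrogio) than to the original augmented-Lagrangian/duality argument of Benamou and Brenier, which is specific to $p=2$. Two small points of care. First, in the upper-bound direction the assignment of velocity $y-x$ to the point $e_t(x,y)$ is genuinely multivalued when interpolating trajectories cross (which can happen at $p=1$, the case this paper actually uses); the disintegration you mention produces the barycentric velocity, and Jensen then gives only
\begin{equation*}
\int_0^1\!\!\int \|v_t\|^p\,df_t\,dt \;\le\; \int \|y-x\|^p\,d\pi^{*},
\end{equation*}
rather than equality --- but an inequality in that direction is all the infimum requires, so your conclusion stands. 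Second, the lower-bound direction really does hinge on the superposition principle exactly as you say, since finiteness of the action only yields $v_t\in L^p(df_t\,dt)$, far below Cauchy--Lipschitz regularity; you have correctly identified this as the one nontrivial technical step rather than waving it away. With those caveats the argument is sound and matches the proof in the literature the paper cites.
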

\begin{defn}[Constant Speed Geodesics]\label{defn: constant}
Given a metric space $(\XX,d)$, we say a path $f_t$ is a \textit{Constant Speed Geodesic} if and only if $d(f_s,f_t) = |s-t|d(f_0,f_1)$ and for a general path we define its \textit{Speed} as $|\dot{f}|_d = \lim_{h \rightarrow 0} \frac{1}{h}d(f_{t +h},f_t)$. Constant speed geodesics have constant speed, see \cite[Remark 7.7]{villani2008optimal}.
\end{defn}
The infimum in \eqref{eq: benamou} can be achieved by a constant speed $W_p$ geodesic $f_t$, and $v_t$ satisfying $||v_t|| = |\dot{f}|_{W_p}$, (see \cite{benamou2000computational}).\newline \newline
On $\mathbb{R}^d$, the constant speed $W_p$ geodesics are the McCann displacement interpolations, see \cite[Definition 1.1]{mccann1997convexity}. For discrete measures, say on $\{0,1...,n \} \subset \mathbb{R}$, the McCann interpolations will traverse $\mathbb{R}$, but not remain in the domain $\{0,1...,n \}$. This means to interpolate across discrete domains, considering them as embedded in a continuum will not suffice.
\newline \newline
On a graph $\GG = (\VV,\EE)$, computing \eqref{eq: kant} is a $|\VV|^2$-dimensional linear program. However, Beckmann \cite{beckmann1952continuous} gave a time-independent velocity field formulation of $W_1$ which on a graph is $|\EE|$-dimensional, (see \cite[Section 6.3]{peyre2019computational}). In fact the Benamou-Brenier formulation can be viewed as a generalisation of Beckmann's work which introduces time parameters.
\newline \newline
The main contribution to a discrete Benamou-Brenier analogue came from Erbar and Maas \cite{maas2011gradient}, who formulated an analogue of $W_2$ on discrete domains by considering a transport equation on graphs and a kinetic energy functional much like in Proposition \ref{prop: benamou}. This metric is Riemannian and interplays well  with discrete gradient flows, although importantly it is \textit{not} the $W_2$ distance. This left the problem of a Benamou-Brenier interpretation of $W_p$ on graphs open.
\newline \newline
Later, Hillion and Johnson \cite{hillion2016} provided sufficient conditions for a true discrete Benamou-Brenier form of $W_1$ on $\mathbb{Z}$, which is the work we are building on.
In this paper we offer our own discrete transport equation involving triples $(f,v,g)$, and derive a Benamou-Brenier formulation for $W_1$ on trees  (Lemma \ref{lem: const}, Theorem \ref{thm: benamoutree}). We then generalise this result to graphs (Lemma \ref{lem: const2}, Theorem \ref{thm: benamou}) and classify the constant speed $W_1$ geodesics via the solutions to our formulation (Proposition \ref{thm: geod}).

\section{Notation and the Discrete Transport Equation}
\begin{exam}\label{examp: bin1}
Consider the path of Binomial distributions $f(t) = \bin(n,p(t))$ on the set $\VV = \{0,1,...,n\}$ where $p(t)$ varies with time. We show the case $n=5$ below:

\begin{center}
\begin{tikzpicture}
  \def\n{5}          
  \def\spacing{1.4}  

  \foreach \i in {0,...,\n} {
    \coordinate (P\i) at ({\i*\spacing}, 0);
    \filldraw[black] (P\i) circle (2pt);
    \node[above] at (P\i) {$f_{\i}$};
  }

  \foreach \i in {0,...,\numexpr\n-1\relax} {
    \pgfmathtruncatemacro{\j}{\i+1}
    \draw (P\i) -- (P\j);
  }
\end{tikzpicture}

\end{center}
By the product rule and some rudimentary combinatorics we can express the derivative as:
\begin{align*}
    \partial_t f_x &= np'\bin_{x-1}(n-1,p) - np'\bin_{x}(n-1,p)\\
    &= -\nabla_1 (np'\bin_x(n-1,p))
\end{align*}
where $\nabla_1(h_x) := h_{x} - h_{x-1}$ is the finite difference operator.
\end{exam}
This resembles \eqref{eq: continuity}, yet the distribution inside the spatial derivative is no longer $f$, in fact it is another distribution $g$, that has support $\{0,1,...,n-1\}$. This means the discrete transport equation satisfied in this case is 
\begin{equation*}
    \partial_t f + \nabla_1 (vg) = 0
\end{equation*}
where $v(t) = np'$ and $g(t) = \bin (n-1,p)$. Our key insight is that $\VV$ can be considered as the vertices of a path graph $\GG$, and that $g$ and $v$ exist on the edges of $\GG$ as follows:
\begin{center}

\begin{tikzpicture}
  \def\n{5}          
  \def\spacing{1.4}  

  \foreach \i in {0,...,\n} {
    \coordinate (P\i) at ({\i*\spacing}, 0);
    \filldraw[black] (P\i) circle (2pt);
    \node[above] at (P\i) {$f_{\i}$};
  }

  \foreach \i in {0,...,\numexpr\n-1\relax} {
    \pgfmathtruncatemacro{\j}{\i+1}
    \draw (P\i) -- node[below] {$v_{\i}g_{\i}$} (P\j);
  }
\end{tikzpicture}

\end{center}
and the discrete transport equation states that the rate of change at node $x$ is the potential difference across that node. On a general graph, we do not have the canonical rightward direction of edges for mass to flow, so in some sense on a graph $\GG = (\VV,\EE)$ we must choose this orientation arbitrarily.
\subsection{Matrices, Operators, Equations and Integrals}
In this section, we will define our necessary objects. This includes a divergence operator, a discrete transport equation, and an energy functional, which we aim to minimise, as constrained by the discrete transport equation.
\begin{defn}[Incidence Matrix and Arrow Shorthand]
    Given a directed graph $\GG = (\VV,\EE)$, we define the incidence matrix  $\Omega = (\omega_{x,k})_{x \in \VV,k\in \EE}$, by
    \begin{equation*}
    \omega_{x,k} =   \begin{cases}
             1 &: k \text{ is incoming to } x \\
             -1 &: k \text{ is outgoing from } x \\
             0 & \text{otherwise.}
        \end{cases}
    \end{equation*}
    As shorthand for $\omega_{x,k} = 1$ and $\omega_{k,x} = -1$ we will write $k \rightarrow x$ and $x \rightarrow k$ respectively.
\end{defn}
For example, we could orient $\mathbb{Z}$ by directing edges away from $0$, rather than left to right.

\begin{defn}[Floor and Ceiling Notation]
    Given an edge $k \in \EE$, we denote $\lceil k \rceil$ to be the node it is incoming to, and $\lfloor k \rfloor$ to be the node it is outgoing from.
\end{defn}
Now, we can use the incidence matrix to define discrete derivative and divergence operators between functions on vertices and edges respectively.
\begin{defn}[The Gradient and Divergence Operator] \mbox{ }
Let $\GG = (\VV,\EE)$ be a graph with incidence matrix $\Omega$. Then 
\begin{enumerate}
\item For a function $f$ defined on vertices $\VV$, we define the \textit{Gradient Operator} at edge $k \in \EE$ to be: 
$$ (\nabla f)_k := (-\Omega^T \cdot f)_k =  -\sum_{x \in \VV} \omega_{x,k} f_x =f_{\lfloor k \rfloor} - f_{\lceil k \rceil},$$
i.e., the difference in function $f$ along the edge $k$.
\item
    For a function $g$ defined on the edges $\EE$, we define the \textit{Divergence Operator} at vertex $x \in \VV$ to be:
    \begin{equation*}
(\nabla \cdot g)_x := -\Omega \cdot g = -\sum_{k \in \EE} \omega_{x,k} g_k
= \sum_{k:x \rightarrow k} g_k - \sum_{k:k \rightarrow x} g_k,\end{equation*} 
i.e., the difference in total flow out of and  flow into $x$.
\end{enumerate}
\end{defn}
If we consider $\nabla_1$ as instead defined on edges $(x,x-1)$ of $\mathbb{Z}$, it corresponds exactly to the $\nabla \cdot$ operator.
\subsection{The Discrete Transport Equation}
We now build a framework to generalise Example \ref{examp: bin1}. We start by building the transport equation as satisfied by a triple $(f,v,g)$, where we impose that $v$ and $g$ exist on edges.
\begin{defn}[Discrete Transport Equation]
    Given a graph $\GG = (\VV,\EE)$ with incidence matrix $\Omega$, we say the transport equation is satisfied if: for a path of distributions on vertices $f = f(t)_x$, a path of functions on edges $v = v(t)_k$ (called the velocity), and a path of distributions on edges $g = g(t)_k$, we have that for each $x \in \VV$
    \begin{equation}\label{eq: original}
        \partial_t f(t)_x + (\nabla \cdot v(t)g(t))_x \text{ for all }  t \in [0,1].
    \end{equation}
    This can be compactly expressed in terms of the incidence matrix $\Omega$:
    \begin{equation}\label{eq: matrix}
        \partial_t f = \Omega \cdot (vg)
    \end{equation}
    where the product $vg$ is component-wise.
\end{defn}
\begin{exam}\label{exam: stargraph1}
    Consider the following directed star graph $S_3$;
    \begin{center}
\begin{tikzpicture}[>=Latex]
    \coordinate (A0) at (0,0);
    \coordinate (A1) at (-1,-1.5);
    \coordinate (A2) at (1,-1.5);
    \coordinate (A3) at (2,0);

    \filldraw[black] (A0) circle (2pt);
    \filldraw[black] (A1) circle (2pt);
    \filldraw[black] (A2) circle (2pt);
    \filldraw[black] (A3) circle (2pt);

    \node[above] at (A0) {$0$};
    \node[above] at (A1) {$1$};
    \node[below] at (A2) {$2$};
    \node[below] at (A3) {$3$};
    \draw[->,>=Latex] (A0) -- (A1) node[midway,left] {$e_{1}$};
    \draw[->,>=Latex] (A0) -- (A2) node[midway,below] {$e_{2}$};
    \draw[->,>=Latex] (A0) -- (A3) node[midway,above] {$e_{3}$};
\end{tikzpicture}
\end{center}    
    and define $f$ to be $
        f(t) = (Z,Zs,Zs,Zs)$ for an arbitrary positive $s(t)$ and partition function $Z(t) = \frac{1}{1+3s}$. This is the stationary distribution of a Markov jump process $Q$ where $q_{x0}s(t) = q_{0x}$. Equation \eqref{eq: original} states that 
    \begin{align*}
        \partial_tf_0 & = -Z'(t) &&= -v_{e_1}g_{e_1} - v_{e_2}g_{e_2} - v_{e_3}g_{e_3},\\
        \partial_t f_1 &= \frac{1}{3}Z'(t) &&= v_{e_1}g_{e_1},\\
        \partial_t f_2 &= \frac{1}{3}Z'(t)  &&= v_{e_2}g_{e_2},\\
        \partial_t f_3 &= \frac{1}{3}Z'(t) &&= v_{e_3}g_{e_3}.
    \end{align*}
    \addtolength{\topskip}{2pt}
    From here we have many choices for $(v,g)$. One of note is when $v$ is invariant across edges, i.e., when we choose 
    \begin{equation*}
        v_{e_1} = v_{e_2} = v_{e_3} = Z'(t), \quad g_{e_1} = g_{e_2} = g_{e_3} = \frac{1}{3}.
    \end{equation*}
    We will return to this distribution and will motivate our choice of solution in Example \ref{examp: star graph}.
\end{exam}

We have our discrete transport equation, but a Benamou-Brenier formulation also requires an energy functional to minimise. In Proposition \ref{prop: benamou}, we compute the expectation of $v$ with respect to $f$. However since $v$ exists on the edges with $g$, it is more natural to take the expectation over $g$. We keep the integral over time unchanged.
\begin{defn}[Integral Formulation]
    Let $\GG = (\VV,\EE)$ be a graph with incidence matrix $\Omega$ and suppose $(f,v,g)$ satisfy \eqref{eq: original}. For $q \geq 1$ we define
    \begin{equation}\label{eq: I}
        \mathcal{I}_q(v,g) \coloneqq \left(\int_0^1 \sum_{k \in \EE} g_k(t) |v_k(t)|^q dt\right)^{\frac{1}{q}}.
    \end{equation}
    Then for two distributions $f(0),f(1)$ on $\VV$, we define 
    \begin{equation}\label{eq: vg}
        V_q(f(0),f(1)) \coloneqq \inf_{f,v,g} \left\{ \mathcal{I}_q(v,g)  : \partial_t f = \Omega \cdot (vg) \right\}
    \end{equation}
    and so $V_q$ is a measure of discrepancy between $f(0)$ and $f(1)$. This is a generalisation of a functional in \cite[Definition 3.2]{hillion2016}, which is the work we are building on.
\end{defn}
\section{Main Results}
\subsection{Benamou-Brenier on the Vertices of a Tree}\label{sect: tree}
The Wasserstein-$1$ distance on $\mathbb{Z}$ has a closed form expression, see \cite[Lemma 8.2]{bickel1981some}. Slightly lesser known is that a generalised expression for the tree exists too -- thanks to Evans, \cite{evans2012phylogenetic}  -- which remarkably corresponds to a metric from Microbiology called the UniFrac metric, see \cite{lozupone2005unifrac}. Evans defines $W_1$ in terms of cuts, but for our purposes, we will formulate it in terms of tails. Let $\GG$ be a tree, and choose an arbitrary root node $r \in \VV$ and define our incidence matrix by directing arrows away from $r$. This allows us to define tails.
\begin{defn}[Tails of Distributions]
We define the tail  past $x \in \VV$ to be
\begin{equation*}
    U(x) = \{y \in \VV: \exists k_1,...k_n (x \rightarrow k_1 \rightarrow \cdots k_n\rightarrow y) \}
\end{equation*}
and then $F_x = \sum_{y \in \VV}\mathbb{I}(y \in U(x)) f_y$ is called the \textit{tail distribution of} $f$ \textit{past }$x$.
\end{defn}
$U(x)$ and $F_x$ generalise the set $\{x,x+1,x+2,...\}$ for $\mathbb{Z}$ and the tail distribution on $\mathbb{Z}$ respectively. Then we have the following result.
\begin{prop}[Wasserstein-$1$ on a Tree \cite{evans2012phylogenetic}]
    Given a rooted tree $\GG$, for two distributions $f(0)$ and $f(1)$ on $\VV$ we have that
    \begin{equation}\label{eq: tree}
    W_1(f(0),f(1)) = \sum_{x \in \VV}|F(1)_x - F(0)_x|.
\end{equation}
It is worth noting that we have other special cases of $W_1$. For cyclic graphs, we have a another closed form expression, (see \cite{rabin2011transportation}), which utilises cuts much like Evans \cite{evans2012phylogenetic}. For a general graph, $W_1$ can be expressed as the minimum transport cost across possible spanning trees of $\GG$, see \cite{guy2019kantorovich}. 
\end{prop}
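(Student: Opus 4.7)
The plan is to prove the tree formula via Kantorovich--Rubinstein duality, which identifies $W_1(f(0),f(1))$ with $\sup_\phi \sum_{x \in \VV} \phi_x (f(1)_x - f(0)_x)$ taken over all functions $\phi : \VV \to \mathbb{R}$ that are $1$-Lipschitz with respect to the graph distance on $\GG$. I would exploit the tree structure to parametrise admissible $\phi$ by their edge-differences, substitute into the dual, interchange the summation, and optimise pointwise on each edge.

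First I would note that, because every pair of vertices in a tree is joined by a unique path and the graph distance is additive along that path, $\phi$ is $1$-Lipschitz on $\GG$ if and only if $|\phi_{\lceil k \rceil} - \phi_{\lfloor k \rfloor}| \leq 1$ for every $k \in \EE$. Writing $\epsilon_k := \phi_{\lceil k \rceil} - \phi_{\lfloor k \rfloor}$, the feasible set becomes the cube $[-1,1]^\EE$, and $\phi_x = \phi_r + \sum_{k \in P(r,x)} \epsilon_k$, where $P(r,x)$ denotes the unique directed $r$-to-$x$ path (which exists and is oriented consistently because all edges point away from $r$).

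Since $\sum_x (f(1)_x - f(0)_x) = 0$, the constant $\phi_r$ drops out and can be set to zero. Substituting into the dual objective and interchanging summation then yields
\begin{equation*}
\sum_{x \in \VV} \phi_x (f(1)_x - f(0)_x) = \sum_{k \in \EE} \epsilon_k \bigl(F(1)_{\lceil k \rceil} - F(0)_{\lceil k \rceil}\bigr),
\end{equation*}
because $k \in P(r,x)$ precisely when $x$ lies in the subtree past $\lceil k \rceil$, i.e.\ $x \in U(\lceil k \rceil)$. Maximising each summand independently via $\epsilon_k = \operatorname{sgn}(F(1)_{\lceil k \rceil} - F(0)_{\lceil k \rceil})$ gives $\sum_{k \in \EE} |F(1)_{\lceil k \rceil} - F(0)_{\lceil k \rceil}|$. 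Since $k \mapsto \lceil k \rceil$ is a bijection from $\EE$ onto $\VV \setminus \{r\}$ and $F_r$ equals the total mass (so $F(1)_r - F(0)_r = 0$), this reindexes to the claimed $\sum_{x \in \VV} |F(1)_x - F(0)_x|$.

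The main obstacle I anticipate is bookkeeping around the root and the convention for $U(x)$: one must interpret the tail so that $F_{\lceil k \rceil}$ captures all mass in the subtree rooted at $\lceil k \rceil$, then verify the bijection $k \leftrightarrow \lceil k \rceil$ and the identity $\{x : k \in P(r,x)\} = U(\lceil k \rceil)$. Once the per-edge Lipschitz reformulation is in place, the sum interchange and the pointwise optimisation are routine.
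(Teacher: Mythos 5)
Your proof is correct, but note that the paper does not actually prove this proposition: it is imported wholesale from Evans' work \cite{evans2012phylogenetic} (with a remark later that the formula could alternatively be re-derived by combining Theorem \ref{thm: benamoutree} with the Beckmann-based proof of Theorem \ref{thm: benamou}). So your argument is a genuinely independent, self-contained derivation via Kantorovich--Rubinstein duality. The mechanism is sound: on a tree the edge-differences $\epsilon_k$ of a potential $\phi$ are genuinely unconstrained beyond $|\epsilon_k|\le 1$ (no cycle conditions), the telescoping along unique paths gives the Lipschitz characterisation in both directions, the interchange of sums converts $\sum_x \phi_x(f(1)_x-f(0)_x)$ into $\sum_k \epsilon_k (F(1)_{\ceil{k}}-F(0)_{\ceil{k}})$ via the identity $\{x: k\in P(r,x)\}=U(\ceil{k})$, and the pointwise optimisation plus the bijection $k\mapsto\ceil{k}$ onto $\VV\setminus\{r\}$ (with $F(1)_r-F(0)_r=0$) yields the stated sum. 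Your approach buys a short, elementary proof that makes transparent exactly where the tree structure enters (decoupling of the edge variables in the dual), and it dovetails with the paper's conventions once you fix the one bookkeeping point you already flag: the paper's definition of $U(x)$ must be read as including $x$ itself (as its own Lemma \ref{thm: tail} computation implicitly requires), so that $F_{\ceil{k}}$ is the full subtree mass below edge $k$. The paper's route, by contrast, recovers the formula only as a byproduct of the heavier Benamou--Brenier/Beckmann machinery. No gaps.
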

\begin{lem}[Tail Variant of the Transport Equation]\label{thm: tail}
    Let $\GG$ be a tree with root node $r$ and incidence matrix $\Omega$, we can invert \eqref{eq: original} via the tail distributions:
\begin{equation} \partial_t F(t)_{\ceil{k}} =  v(t)_{k}g(t)_{k} \text{ for each }k \in \EE,\label{eq:tailpde} \end{equation}
and $g$ can be derived exactly from $f$ and $v$.
\end{lem}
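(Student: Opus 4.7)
The plan is to derive \eqref{eq:tailpde} by summing the vertex-wise transport equation \eqref{eq: original} over the subtree $U(\lceil k \rceil)$ and exploiting the tree structure together with the root-away orientation to kill all but one boundary term.

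Fix an edge $k^* \in \EE$ and let $U^* = U(\lceil k^* \rceil)$. Since $F_{\lceil k^* \rceil} = \sum_{y \in U^*} f_y$, summing \eqref{eq: original} over $x \in U^*$ gives
\begin{equation*}
\partial_t F_{\lceil k^* \rceil} = -\sum_{x \in U^*} (\nabla \cdot vg)_x = \sum_{x \in U^*}\left(\sum_{k:\, k\to x}(vg)_k - \sum_{k:\, x\to k}(vg)_k\right).
\end{equation*}
I would then classify each edge $k \in \EE$ by whether its endpoints $\lfloor k \rfloor, \lceil k \rceil$ lie in $U^*$. Edges with both endpoints inside (or both outside) $U^*$ contribute nothing, so only boundary edges survive.

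The key structural step is to argue that the only boundary edge is $k^*$ itself. Because $\GG$ is a tree rooted at $r$ with all arrows directed away from $r$, each non-root vertex has a unique parent edge, and $U^*$ is precisely the set of descendants of $\lceil k^* \rceil$ (including $\lceil k^* \rceil$). If $k$ satisfies $\lceil k \rceil \in U^*$ with $\lceil k \rceil \neq \lceil k^* \rceil$, then $\lfloor k \rfloor$ is the parent of $\lceil k \rceil$ and still lies in the subtree, so both endpoints are inside. Conversely, any edge with $\lfloor k \rfloor \in U^*$ has $\lceil k \rceil$ as a child of $\lfloor k \rfloor$, hence also in $U^*$. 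Thus the only edge with exactly one endpoint in $U^*$ is $k^*$ (with $\lceil k^* \rceil \in U^*$, $\lfloor k^* \rfloor \notin U^*$), contributing $+(vg)_{k^*}$, which yields \eqref{eq:tailpde}.

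For the second assertion, once \eqref{eq:tailpde} is established, $g_k = \partial_t F_{\lceil k \rceil}/v_k$ wherever $v_k \neq 0$, and since $F$ is determined linearly by $f$ via the tree's ancestor structure, $g$ is recovered from $(f,v)$ alone. I expect the main obstacle to be the clean bookkeeping of boundary edges in the third step; everything else is essentially a telescoping argument, but the claim relies crucially on the tree being acyclic and the orientation being consistently root-away, so that "child of a descendant is a descendant" really forces the boundary to consist of the single edge $k^*$.
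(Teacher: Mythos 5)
Your proposal is correct and follows essentially the same route as the paper: the paper sums $\mathbb{I}(y \in U(\lceil k \rceil))\,\partial_t f_y$ over vertices, interchanges the two sums, and observes that the gradient of the subtree indicator is supported only on the edge $k$ --- which is precisely your boundary-edge classification packaged as a one-line identity. The explicit bookkeeping you flag as the main obstacle is exactly the step the paper compresses into $\bigl(\nabla \mathbb{I}(\cdot \in U(\lceil k \rceil))\bigr)_j = \mathbb{I}(j=k)$, justified by the same ``child of a descendant is a descendant'' reasoning you give.
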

\begin{proof}
    We drop the dependence on $t$ for now, and consider:
    \begin{align*}
        \partial_t F_{\ceil{k}} &=\sum_{y \in \VV} \mathbb{I}(y \in U(\ceil{k})) \partial_t f_y\\ &= \sum_{y \in \VV} \mathbb{I}(y \in U(\ceil{k}))\sum_{j \in \EE} \omega_{y,j}v_jg_j\\
        &= \sum_{j \in \EE}v_j g_j \sum_{y \in \VV} \omega_{y,j}\mathbb{I}(y \in U(\ceil{k})) \\
        &= \sum_{j \in \EE}v_jg_j (\nabla \mathbb{I}(y \in U(\ceil{k}))_j\\
        &= \sum_{j \in \EE} v_j g_j \mathbb{I}(j = k)\\
        &= v_kg_k.
    \end{align*}
    Then $g$ can be derived from $f$ and $v$ by rearranging \eqref{eq:tailpde}.
\end{proof}
While we have inverted \eqref{eq: original}, we still have a family of possible $(v,g)$ which satisfy \eqref{eq:tailpde}, below we discuss the most important case for our purposes.
\begin{lem}[Constant Speed Solutions]\label{lem: const}
    Given a path $f$, we define the \textit{Constant Speed Solution} to be
    \begin{equation}
        v_k = \text{sign}(\partial_t F_{\ceil{k}})|v|, \quad g_k = |\partial_t F_{\ceil{k}}|/|v|
    \end{equation}
     where $|v| = \sum_{x \in \VV} |\partial_t F(t)_x|$. For this we have $|v| = |\dot{f}|_{W_1}$, and if $f$ is a constant speed geodesic, we additionally have that $\mathcal{I}_2(v,g) = |v|= W_1(f(0),f(1))$.
\end{lem}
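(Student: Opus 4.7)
The plan is to verify the two claimed identities in turn, using the tail form of the transport equation from Lemma~\ref{thm: tail} together with Evans' tree formula \eqref{eq: tree}. First I would sanity-check that the prescribed $(v,g)$ actually solves \eqref{eq:tailpde}: substituting the definitions yields $v_k g_k = \mathrm{sign}(\partial_t F_{\ceil{k}}) |v| \cdot |\partial_t F_{\ceil{k}}|/|v| = \partial_t F_{\ceil{k}}$, as required. So $(v,g)$ is a genuine solution and $g$ is a distribution on edges (nonnegative and, as we'll see, summing to $1$ when normalised correctly).

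For the first identity $|v| = |\dot f|_{W_1}$, I would apply Evans' formula to $W_1(f(t+h), f(t))$, giving
\begin{equation*}
\tfrac{1}{h} W_1(f(t+h), f(t)) = \sum_{x \in \VV} \left| \tfrac{F(t+h)_x - F(t)_x}{h} \right|.
\end{equation*}
Sending $h \to 0$ and interchanging the finite sum with the limit gives $|\dot f|_{W_1} = \sum_x |\partial_t F_x| = |v|$.

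For the second claim, the key combinatorial observation is that in a rooted tree each non-root vertex $x$ is $\ceil{k}$ for exactly one edge $k$, and $F_r = \sum_y f_y = 1$ is constant so $\partial_t F_r = 0$. Hence summing over edges is the same as summing over non-root vertices, and
\begin{equation*}
\sum_{k \in \EE} g_k v_k^2 = \sum_{k \in \EE} \frac{|\partial_t F_{\ceil{k}}|}{|v|} \cdot |v|^2 = |v| \sum_{x \in \VV} |\partial_t F_x| = |v|^2.
\end{equation*}
Therefore $\mathcal{I}_2(v,g)^2 = \int_0^1 |v(t)|^2 \, dt$. If $f$ is a constant speed geodesic, then $|\dot f|_{W_1}$ is constant in $t$ and equal to $W_1(f(0),f(1))$ (by definition of constant speed geodesic, see Definition~\ref{defn: constant}), so $|v(t)| \equiv W_1(f(0),f(1))$ and the integral collapses to $W_1(f(0),f(1))^2$, yielding $\mathcal{I}_2(v,g) = |v| = W_1(f(0),f(1))$.

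I do not anticipate a serious obstacle: the argument is essentially bookkeeping once one recognises the edge-to-non-root-vertex bijection and that $\partial_t F_r = 0$. The only subtlety worth flagging is justifying the interchange of the limit and the (finite) sum when differentiating the tail formula, which is routine as long as each $F_x$ is differentiable in $t$; this is implicit in writing $\partial_t F_x$ throughout.
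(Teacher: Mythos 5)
Your proposal is correct and follows essentially the same route as the paper's proof: compute $|\dot{f}|_{W_1}$ via Evans' tail formula \eqref{eq: tree} and an interchange of the finite sum with the limit, then collapse $\mathcal{I}_2(v,g)$ using that $g$ sums to one and that $|v(t)|$ is constant equal to $W_1(f(0),f(1))$ for a constant speed geodesic. The only difference is that you make explicit the edge/non-root-vertex bijection and $\partial_t F_r = 0$ needed to see $\sum_k g_k = 1$, a step the paper leaves implicit.
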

\begin{proof}
    See Appendix \ref{appendix: lem: const}.
\end{proof}

\begin{thm}[Benamou-Brenier across a tree]\label{thm: benamoutree}
    Let $\GG$ be a rooted tree, then we have
    \begin{equation}\label{eq: tree ben}
W_1(f(0),f(1)) = \inf_{f,g,v}\left\{\mathcal{I}_2(v,g) : \partial_t f = \Omega \cdot(vg)) \right\}.
\end{equation}
for all $f(0)$ and $f(1)$ on $\VV$. The minimum is achieved by a constant speed geodesic $f$, and the velocity $v$ satisfies $|v(t)_k| = |\dot{f}|_{W_1}$, analogous to the continuous case.
\end{thm}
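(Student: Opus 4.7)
The plan is to prove Theorem \ref{thm: benamoutree} by establishing the two matching inequalities: first the lower bound $\mathcal{I}_2(v,g) \geq W_1(f(0),f(1))$ for \emph{every} admissible triple $(f,v,g)$, and then a matching upper bound by exhibiting an explicit minimiser built from a constant speed geodesic and invoking Lemma \ref{lem: const}.

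For the lower bound, I would start from the tail variant Lemma \ref{thm: tail}, which gives $\partial_t F_{\ceil{k}} = v_k g_k$ for each edge $k$. Integrating in time and applying the triangle inequality yields
\begin{equation*}
|F(1)_{\ceil{k}} - F(0)_{\ceil{k}}| = \left|\int_0^1 v_k(t) g_k(t)\, dt\right| \leq \int_0^1 |v_k(t)| g_k(t)\, dt.
\end{equation*}
Because $\GG$ is a tree rooted at $r$, every non-root vertex $x$ equals $\ceil{k}$ for exactly one edge $k$, and $F(1)_r = F(0)_r = 1$. Summing over edges and applying the Evans formula \eqref{eq: tree} therefore gives
\begin{equation*}
W_1(f(0),f(1)) = \sum_{k \in \EE} |F(1)_{\ceil{k}} - F(0)_{\ceil{k}}| \leq \int_0^1 \sum_{k \in \EE} |v_k(t)| g_k(t)\, dt.
\end{equation*}
The final step is Cauchy--Schwarz with respect to the product measure $g_k(t)\,dt$ on $\EE \times [0,1]$, writing $|v_k|g_k = |v_k|\sqrt{g_k}\cdot\sqrt{g_k}$, which yields
\begin{equation*}
\int_0^1 \sum_{k} |v_k| g_k\, dt \leq \mathcal{I}_2(v,g) \cdot \left(\int_0^1 \sum_{k} g_k(t)\, dt\right)^{1/2} = \mathcal{I}_2(v,g),
\end{equation*}
since $g(t)$ is a distribution on $\EE$ for every $t$, so $\sum_k g_k(t) = 1$.

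For the upper bound, I would construct an explicit constant speed $W_1$ geodesic by taking the linear interpolation $f(t) = (1-t)f(0) + t f(1)$. Linearity in $t$ transfers directly to the tails: $F(t)_x = (1-t)F(0)_x + t F(1)_x$, hence
\begin{equation*}
W_1(f(s),f(t)) = \sum_{x \in \VV}|F(s)_x - F(t)_x| = |s-t|\,W_1(f(0),f(1)),
\end{equation*}
so $f$ is a constant speed geodesic. Lemma \ref{lem: const} then delivers a pair $(v,g)$ satisfying \eqref{eq:tailpde} with $|v(t)_k|=|v|=W_1(f(0),f(1))$ and $\mathcal{I}_2(v,g) = W_1(f(0),f(1))$, matching the lower bound and establishing the claimed characterisation of the minimiser.

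The main obstacle is the Cauchy--Schwarz step: one must identify $g_k(t)\,dt$ as a probability measure and check that $\sum_k g_k(t)=1$ is genuinely preserved along admissible paths, which is ultimately what allows the inequality to be tight. Everything else is a direct assembly of Lemma \ref{thm: tail}, the Evans tail formula \eqref{eq: tree}, and Lemma \ref{lem: const}; no delicate existence or compactness argument is needed because the linear interpolation provides an explicit, in-domain geodesic on the simplex $\mathcal{P}(\VV)$.
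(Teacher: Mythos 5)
Your proof is correct and follows essentially the same route as the paper: the lower bound via the tail identity $\partial_t F_{\ceil{k}} = v_kg_k$ and the normalisation $\sum_k g_k(t)=1$, and the matching upper bound via the linear-interpolation constant speed geodesic together with Lemma \ref{lem: const}. The only cosmetic difference is that you apply Cauchy--Schwarz once on the product measure $g_k(t)\,dt$ over $\EE\times[0,1]$, whereas the paper chains Jensen's inequality in $t$ with Cauchy--Schwarz over $\EE$; the two are equivalent.
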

\begin{proof}
    An immediate consequence of \eqref{eq: tree} and \eqref{eq:tailpde} is we are given an integral formulation of $W_1$. Remarkably, if $(f,v,g)$ is \textit{any} triple which satisfies \eqref{eq: original}, then we can write
    \begin{align*}
        W_1(f(0),f(1) &= \sum_{x \in \VV} | F(1)_x - F(0)_x| \\
        &= \sum_{x \in \VV} \left| \int_0^1 \partial_t F(t)_x dt \right| \\
        &= \sum_{k \in \EE} \left| \int_0^1 g(t)_kv(t)_k dt \right|
    \end{align*}
    which transforms a sum over tails to a sum over edges. Squaring both sides and applying both Jensen's and the Cauchy--Schwarz inequalities bounds $W_1$ above by $V_2$.
    \begin{align*}
        W_1(f(0),f(1))^2 &= \left( \sum_{k \in \EE} \left| \int_0^1 g(t)_kv(t)_kdt \right|\right)^2 \\
        &\leq \left( \int_0^1 \sum_{k \in \EE} |g(t)_kv(t)_k| dt \right)^2\\
        &\leq \int_0^1 \left( \sum_{k \in \EE} |g(t)_kv(t)_k|\right)^2 dt \\
        &\leq \int_0^1 \sum_{k \in \EE} g(t)_k|v(t)_k|^2 dt  = \mathcal{I}_2(v,g)^2
    \end{align*}
    So, we have that $V_2 \geq W_1$ too. However, Lemma \ref{lem: const} says that a constant speed solution achieves $W_1$, so choosing an arbitrary constant speed geodesic as in Definition \ref{defn: constant} - see Appendix \ref{appendix: geodesic} for an explicit example - we have that $V_2(f(0),f(1)) = W_1(f(0),f(1))$ and we are done.
\end{proof}
In fact, the proofs for Lemma \ref{lem: const} and Theorem \ref{thm: benamoutree} work for $q \geq 1$ too, so we have a more general result that $V_q = W_1$, see Appendix \ref{appendix: thm: benamoutree} , and we can choose $q$ arbitrarily.
\begin{exam}\label{examp: bin}
    Consider the Binomial case (Example \ref{examp: bin1}) when $p(t) = (1-t)p + tq$ for $p > q > 0$. We know that $v(t) =  n(q-p)$ and $g(t) = \bin(n-1,p(t))$ satisfy \eqref{eq: original}. We also have that $\mathcal{I}_2(v,g) = n(q-p) = W_1(f(0),f(1))$,
    so the original choice of $(f,v,g)$ is minimizing. This solution also contains a constant speed geodesic in the form of $f$.
\end{exam}

\begin{exam}\label{examp: poi}
    Let $f(t) = \text{Poi}((1-t)\lambda + t\mu)$, then $v(t) = \mu - \lambda$ and $g(t) = \text{Poi}((1-t)\lambda + t\mu)$ also  satisfy \eqref{eq: original} and $\mathcal{I}_2(v,g) = \lambda - \mu = W_1(f(0),f(1))$. Again $f$ is a constant speed geodesic.
\end{exam}
\begin{exam}\label{examp: star graph}
Consider Example \ref{exam: stargraph1}. When $s(t) = \frac{-1}{3}(1 + \frac{1}{at+b})$ for $a,b < -1$, we have that $s$ is both decreasing and positive. We can also show that $Z''(t) = 0$. So by choosing the edge invariant solution we know $Z'(t) = v_{01} = v_{02} = v_{03}$ are also constant in time. Plugging this into $\mathcal{I}_2$ we have
\begin{equation*}
    \mathcal{I}_2(v,g) = \int_0^1 Z'(t) dt = Z(1) - Z(0) = W_1(f(0),f(1)).
\end{equation*}
For this choice of $s(t)$, we have that $f(t) = (1-t)f(0) + tf(1)$ and by Appendix \ref{appendix: geodesic} this is a constant speed geodesic.
\end{exam}
We cannot apply the same argument to the graph case as in the tree, to work toward the graph, we will build a \textit{Reduced Formulation} of $V_q$ by studying the solutions to \eqref{eq: original} in more detail.
\subsection{The Reduced Formulation on a Graph}
In the continuous formulation, so long as smoothness assumptions were made, switching from $f_t$ to $v_t$ and vice versa was a matter of solving a first-order PDE, see  \cite[Chapter 16.1]{ambrosio2021lectures} for details. However, in our case, we can solve this more directly - although different methods are required for each direction.
\begin{rmk}\label{rmk: sol2}
Given $f$, solving \eqref{eq: matrix} for $(v,g)$ involves inverting $\Omega$. Fortunately there is a plethora of work on solving these systems, see \cite[Section 2]{bapat2010graphs}. For an incidence matrix $\Omega$, we have $\text{Im}(\Omega) = \{ x: \sum_j x_j = 0 \}$ and $\rank (\Omega) = |\VV| -1$, see \cite[Lemma 2.4]{bapat2010graphs}. We know $\partial_t f \in \text{Im} (\Omega)$ for all $t$, so we can always solve for the vector $vg$. Although this system is not full rank, removing an arbitrary row $r$ from $f$ and $\Omega$ gives
\begin{equation*}
    \partial_t \tilde{f} = \tilde{\Omega} \cdot (vg)
\end{equation*}
which is full rank. Then for an arbitrary right inverse $P$ (such that $ \tilde{\Omega}  P= I$) we can express the general solution as
\begin{equation*}
    vg = P \cdot \partial_t \tilde{f} + \epsilon
\end{equation*}
for some $\epsilon \in \ker(\Omega)$. One can find a right inverse by choosing a spanning tree $\TT$ and considering its \textit{path matrix} $P_{\TT}$, the right inverse then has block structure including $P_{\TT}$. This additionally characterises $\ker(\Omega)$ as the cycle space of $\Omega$, see \cite[Theorem 2.13]{bapat2010graphs}. In the case that $\GG$ is a tree, $\tilde{\Omega}$ is invertible, and $\tilde{\Omega}^{-1}$ is defined by \eqref{eq:tailpde}. For each valid $(P,\epsilon)$ we have a family of solutions for $(v,g)$.
\end{rmk} 

\begin{rmk}\label{rmk: sol1}
    Suppose we instead have a pair $(v,g)$ on the edges of a graph, we can create a path $f$ on the vertices as follows:
    \begin{equation}\label{eq: path}
        f(t)_x = \int_0^t \sum_{k \in \EE} \omega_{x,k}v(\tau)_kg(\tau)_k d\tau + f(0)_x
    \end{equation}
    and for shorthand we denote $f = \int \Omega \cdot(vg)$. By construction, $(f,v,g)$ will satisfy \eqref{eq: original}. Conversely, for any path $f(t)$ on the vertices, no matter which $(v,g)$ pair we choose as a solution to \eqref{eq: original} we can reconstruct $f$ by using \eqref{eq: path}.
\end{rmk}
The lack of a unique inverse of $\Omega$ prevents us from obtaining $g$ in terms of $f$ and $v$ or $v$ in terms of $f$ and $g$. However we can use Remark \ref{rmk: sol1}, to limit our search for pairs $(v,g)$ living on the edges of $\GG$. 
\begin{prop}[Reduced Formulation]\label{thm: simple}
    We have that
    \begin{equation*}
        V_q(f(0),f(1))= \inf_{v,g} \left\{ \mathcal{I}_q(v,g) : f(1) - f(0) = \Omega \textstyle \int_0^1 gv dt \right\}
    \end{equation*}
\end{prop}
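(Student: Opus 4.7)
The plan is to prove the two inequalities $V_q \geq (\text{RHS})$ and $V_q \leq (\text{RHS})$ separately, noting that the functional $\mathcal{I}_q(v,g)$ depends only on $(v,g)$, so the content of the proposition is entirely about matching the feasible sets. In both directions the map between a triple $(f,v,g)$ satisfying \eqref{eq: matrix} and a pair $(v,g)$ satisfying the integrated constraint preserves the cost exactly, so the equality of infima will follow.

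For the $\leq$ direction (the reduced infimum bounds $V_q$ from below), I would take any admissible triple $(f,v,g)$ for \eqref{eq: vg} and integrate the transport equation \eqref{eq: matrix} in time from $0$ to $1$, obtaining
\begin{equation*}
    f(1) - f(0) \;=\; \int_0^1 \partial_t f\, dt \;=\; \Omega \int_0^1 v g\, dt,
\end{equation*}
so that $(v,g)$ is feasible for the reduced problem with identical cost $\mathcal{I}_q(v,g)$. Taking the infimum over all such $(f,v,g)$ yields $V_q(f(0),f(1)) \geq \inf\{\mathcal{I}_q(v,g) : \cdots\}$.

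For the reverse inequality I would invoke Remark \ref{rmk: sol1}: given any pair $(v,g)$ satisfying the reduced constraint, define
\begin{equation*}
    f(t)_x \;\coloneqq\; f(0)_x + \int_0^t \sum_{k \in \EE} \omega_{x,k}\, v(\tau)_k g(\tau)_k\, d\tau.
\end{equation*}
Differentiating gives $\partial_t f = \Omega \cdot (vg)$ by construction, and the reduced constraint $f(1) - f(0) = \Omega \int_0^1 vg\, dt$ guarantees that this $f$ hits the prescribed endpoint at $t=1$. Hence $(f,v,g)$ is admissible for the original problem \eqref{eq: vg} with the same value $\mathcal{I}_q(v,g)$, so the reduced infimum is at least $V_q(f(0),f(1))$.

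The one subtlety to watch, and the likely main obstacle, is that the original formulation requires $f(t)$ to be a genuine path of distributions on $\VV$. Total mass conservation is automatic since every column of $\Omega$ sums to zero, so $\sum_x f(t)_x = \sum_x f(0)_x = 1$ along the reconstructed path. Nonnegativity of $f(t)_x$, however, is not automatic from an arbitrary reduced-admissible $(v,g)$; to keep the argument clean one either restricts the reduced feasible set implicitly to pairs whose integrated flows keep $f(t) \geq 0$, or one observes that since the two feasible sets are in costless bijection, any minimizing sequence on one side produces one on the other. I would phrase the reverse direction as working with the same implicit admissibility convention in both formulations, so that the bijection $(f,v,g) \leftrightarrow (v,g)$ via Remark \ref{rmk: sol1} and time-integration is cost-preserving and the two infima coincide.
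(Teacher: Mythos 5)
Your proof is correct and follows essentially the same route as the paper: the paper also invokes Remark \ref{rmk: sol1} to pass between triples $(f,v,g)$ and pairs $(v,g)$ via time-integration, with the integrated endpoint constraint replacing the pointwise-in-time transport equation. The nonnegativity subtlety you flag is real but is glossed over in the paper's own (terser) argument as well, so you have if anything been more careful than the source.
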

\begin{proof}
    By Remark \ref{rmk: sol1}, we have a surjection $(v,g) \mapsto f$ such that $\partial_t f = \Omega \cdot (vg)$, so searching over triples $(f,v,g)$ accounts to searching over $(v,g)$ and inducing a path $f$ via Equation \eqref{eq: path}. However we require the constructed $f$ to truly interpolate between $f(0)$ and $f(1)$, which can only happen if 
    \begin{equation}\label{eq: reduced1}
        \int_0^1 \sum_{k \in \EE} \omega_{x,k}g(t)_kv(t)_k dt = f(1) - f(0).
    \end{equation}
    As shorthand we denote $\Omega \cdot \int_0^1 gv dt = f(1) - f(0)$.
\end{proof}
This formulation is connected to the Beckmann formulation \cite{beckmann1952continuous}, albeit with time-dependent parameters.
\begin{rmk}\label{rmk: sol3}
We have a new condition given by \eqref{eq: reduced1}, so we should also investigate the structure of solutions, fortunately we can inherit most of the machinery from Remark \ref{rmk: sol2}, as we are still inverting an incidence matrix. We have that $(v,g)$ must satisfy
\begin{equation}\label{eq: reducedcond}
    \int_0^1 vg dt = P \cdot [\tilde{f}(1) - \tilde{f}(0)] + \epsilon
\end{equation}
for some $P$ such that $\Omega \cdot P = I$ and $\epsilon \in \ker \Omega$.
\end{rmk}
\addtolength{\topskip}{2pt}
\subsection{Benamou-Brenier on the Vertices of a graph}
The multiple possible inversions of $\Omega$ explain why we cannot extend the argument of Section \ref{sect: tree} to a graph. In fact even constant speed solutions defined in Lemma \ref{lem: const} might not minimize $V_q$, never mind integrate to $W_1$. However we can demonstrate that constant speed solutions are still minimal.
\begin{lem}[Generalised Constant Speed Solutions]\label{lem: const2}
    Given a graph $\GG$, we say a pair $(v,g)$ which satisfies \eqref{eq: reduced1} is a \textit{Constant Speed Solution} if $|v_k| = |v|$ for some function $|v|: [0,1] \rightarrow \mathbb{R}$. For any pair $(v,g)$ we have a constant speed solution $(v^*,g^*)$ such that $\mathcal{I}_q(v,g) \geq \mathcal{I}_q(v^*,g^*)$. Additionally, we can find another constant speed solution $(v^{**},g^{**})$ such that $|v^{**}_k| = |v^{**}| = c$ for $c \in \mathbb{R}$ and $\mathcal{I}_q(v^{*},g^{*}) \geq \mathcal{I}_q(v^{**},g^{**})$.
\end{lem}
\begin{proof}
     See Appendix \ref{appendix: lem: const2}.
\end{proof}
The key takeaway from Lemma \ref{lem: const2} is that if we find any pair $(v,g)$ which achieves $W_1$, we can transform it via $(v^*,g^*)$ and $(v^{**},g^{**})$ to $W_1$ achieving pairs which have desirable properties. One of these properties will be key in the main result of Section \ref{sec: 4}. On a tree, we have one inverse (up to choice of root node) for Remark \ref{rmk: sol3} and we know the constant speed solutions are explicitly defined via Lemma \ref{lem: const}. 
\newline \newline 
We now present the main result of the paper: showing that the Benamou-Brenier formula holds for $q \geq 1$ on a general connected graph.
\begin{thm}[Benamou-Brenier across a Graph]\label{thm: benamou}
    For a connected graph $\GG$, for any $q \geq 1$, we can express the Wasserstein-$1$ distance between $f(0)$ and $f(1)$ by
    \begin{equation}
        \inf_{f,v,g} \left\{\left( \int_0^1 \sum_{k \in \EE} g(t)_k |v(t)_k|^q dt\right)^{\frac{1}{q}} : \partial_t f = \Omega \cdot (vg) \right\}
    \end{equation}
    where $\Omega$ is an arbitrary incidence matrix of $\GG$, and the infimum is achieved for at least one triple $(f,v,g)$.
\end{thm}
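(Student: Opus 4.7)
The plan is to reduce the time-dependent minimisation to a static flow problem via Proposition \ref{thm: simple} and Lemma \ref{lem: const2}, and then identify that flow problem with $W_1$ through the Beckmann formulation on a graph.

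First, I would invoke Proposition \ref{thm: simple} to rewrite $V_q(f(0), f(1))$ as the infimum of $\mathcal{I}_q(v,g)$ over pairs with $\Omega \int_0^1 vg \, dt = f(1) - f(0)$. By the parameterisation in \eqref{eq: reducedcond}, every feasible pair determines a flow $J := \int_0^1 vg \, dt = P(\tilde{f}(1) - \tilde{f}(0)) + \epsilon$ satisfying $\Omega J = f(1) - f(0)$; conversely, Lemma \ref{lem: const2} provides a constant speed pair $(v^\star, g^\star)$ realising any such $J$. A direct computation, in the spirit of the tree case, gives $g_k^\star |v_k^\star|^q = |J_k| |v^\star|^{q-1}$, which upon summing over $k$ and integrating yields $\mathcal{I}_q(v^\star, g^\star) = |v^\star| = \sum_{k \in \EE} |J_k| = \|J\|_1$, independently of $q$. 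Crucially, Lemma \ref{lem: const2} tells us this is the minimum of $\mathcal{I}_q$ among all $(v,g)$ inducing the flow $J$.

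Second, taking the infimum over $(P, \epsilon)$, equivalently over feasible flows $J$, gives
\begin{equation*}
V_q(f(0), f(1)) = \inf \left\{ \|J\|_1 : J \in \mathbb{R}^{\EE},\; \Omega J = f(1) - f(0) \right\},
\end{equation*}
which is precisely the discrete Beckmann formulation of $W_1$ on a graph; see \cite{beckmann1952continuous} and \cite[Section 6.3]{peyre2019computational}. To close the identification I would appeal to Guy's spanning tree characterisation \cite{guy2019kantorovich}: a minimising flow $J^\star$ may be chosen supported on a spanning tree $\TT$, restricting the flow problem to $\TT$, where Theorem \ref{thm: benamoutree} together with \eqref{eq: tree} yield $\|J^\star\|_1 = W_1(f(0), f(1))$. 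For attainment, the feasible set $\{ \Omega J = f(1) - f(0) \}$ is a non-empty closed convex polyhedron in $\mathbb{R}^{\EE}$ (non-empty by connectedness of $\GG$) on which $J \mapsto \|J\|_1$ attains its minimum; the associated constant speed pair, together with the linear interpolation $f^\star(t) = (1-t) f(0) + t f(1)$, provides the achieving triple.

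The main obstacle is justifying that the time-dependent minimisation collapses to a static one: showing that no genuinely time-varying $(v(t), g(t))$ can outperform a constant-in-time pair carrying the same net flow $J$. This is the content of Lemma \ref{lem: const2}, whose proof must rerun Jensen and Cauchy--Schwarz arguments analogous to those in Theorem \ref{thm: benamoutree} but now parameterised over the family $(P, \epsilon)$. A secondary subtlety is verifying that this parameterisation exhausts every flow $J$ with $\Omega J = f(1) - f(0)$, which follows from the rank and kernel analysis of $\Omega$ recalled in Remark \ref{rmk: sol2}.
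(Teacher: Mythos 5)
Your proposal is correct and follows essentially the same route as the paper's proof: pass to the reduced formulation of Proposition \ref{thm: simple}, lower-bound $\mathcal{I}_q(v,g)$ by $\sum_{k\in\EE}|J_k|$ for $J=\int_0^1 vg\,dt$ via the Jensen and Cauchy--Schwarz chain, and attain the bound with the constant-in-time pair $v_k=\operatorname{sign}(J_k)|v|$, $g_k=|J_k|/|v|$ built from a Beckmann-optimal flow. Your only genuine divergence is the final identification of $\inf\{\sum_{k}|J_k| : \Omega J = f(1)-f(0)\}$ with $W_1$: the paper simply quotes the Beckmann formulation on a graph, whereas you rederive it by pushing an optimal flow onto a spanning tree $\TT$ and invoking Theorem \ref{thm: benamoutree}, \eqref{eq: tree} and the spanning-tree characterisation of \cite{guy2019kantorovich}. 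That detour is sound (a minimising flow may be taken acyclic, and on a tree the feasible flow is unique with $\ell_1$-norm $\sum_x|F(1)_x-F(0)_x|$), but it makes the graph theorem logically dependent on the tree case, which would undercut the paper's subsequent remark that Theorem \ref{thm: benamou} is proved independently of Theorem \ref{thm: benamoutree} and can therefore be used to rederive the tree formula; if you want that independence, cite the Beckmann identity directly. Finally, routing the lower bound through Lemma \ref{lem: const2} rather than applying the inequality chain to an arbitrary feasible pair is only a repackaging, since the lemma's proof is exactly that chain.
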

\begin{proof}
    See Appendix \ref{appendix: thm: benamou}.
\end{proof}
\begin{rmk}[Wasserstein Tree Distance]
    The proof of Theorem \ref{thm: benamou} is independent of the proof for the tree, meaning that the $W_1$ distance on a tree can be derived from Theorems \ref{thm: benamoutree} and \ref{thm: benamou}.
\end{rmk}
\section{Characterisation of Geodesics}\label{sec: 4}
On $\mathbb{R}^d$, we can achieve the minima of \eqref{eq: benamou} by a pair $(f,v)$ including a constant speed geodesic $f$. We also know that constant speed $W_p$-geodesics between $f(0)$ and $f(1)$ are the McCann displacement interpolations -- which are induced by an optimal couplings between $f(0)$ and $f(1)$.
\newline\newline
In Examples \ref{examp: bin}, \ref{examp: poi} and \ref{examp: star graph}, we saw that the minimizing triple $(f,v,g)$ for \eqref{eq: I} included a constant speed geodesic $f$ in much the same way. We also know in general that the path $f$ can be obtained from $(v,g)$ via \eqref{eq: path} much like the McCann interpolations. We now establish our own characterisation of these geodesics.

\begin{prop}[Geodesics]\label{thm: geod}
Given a pair $(v,g)$ which achieves $W_1$ in Proposition \ref{thm: simple}, and satisfies $|v_k| = |v| = c$, the induced path $f$ is a constant speed $W_1$ geodesic. Conversely, if $f$ is a constant speed geodesic, then there exists $(v,g)$ such that  $|v_k| = |v| = c$ which achieves $W_1$ and induces $f$.
\end{prop}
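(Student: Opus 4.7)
The plan is to prove both directions using the Jensen/Hölder equality cases built into the proof of Theorem~\ref{thm: benamou}, combined with the parametrization of solutions from Remark~\ref{rmk: sol2} and a time-localized Beckmann identity.

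For the forward direction, assume $(v,g)$ achieves the infimum $V_q = W_1(f(0),f(1))$ for some $q>1$, and set $h(t) := \sum_{k\in\EE} g(t)_k |v(t)_k|$. The chain
\begin{align*}
W_1(f(0),f(1))^q &\leq \left(\int_0^1 h(t)\,dt\right)^q \\
&\leq \int_0^1 h(t)^q\,dt \\
&\leq \int_0^1 \sum_{k\in\EE} g(t)_k |v(t)_k|^q\,dt
\end{align*}
must collapse to equalities, where the first step is Theorem~\ref{thm: benamou} with $q=1$, the second is Jensen in time, and the third is Jensen with probability weights $g(t)_k$ applied to $x\mapsto x^q$. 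Equality in the second forces $h$ to be constant in $t$ with value $W_1(f(0),f(1))$. For any $0\leq s\leq t\leq 1$, the restriction of $(v,g)$ to $[s,t]$ satisfies the transport equation from $f(s)$ to $f(t)$, so Theorem~\ref{thm: benamou} (again with $q=1$) gives $W_1(f(s),f(t)) \leq \int_s^t h(\tau)\,d\tau = (t-s)\,W_1(f(0),f(1))$. The triangle inequality for $W_1$ forces equality at every $(s,t)$, yielding the constant speed geodesic property.

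For the converse, let $f$ be a constant speed $W_1$ geodesic. I would produce $(v,g)$ inducing $f$ with $\mathcal{I}_q(v,g) = W_1(f(0),f(1))$. By Remark~\ref{rmk: sol2}, pairs $(v,g)$ inducing $f$ correspond to measurable choices of right inverse $P(t)$ and kernel element $\epsilon(t)$ via $v(t)g(t) = P(t)\partial_t\tilde f(t) + \epsilon(t)$. Lemma~\ref{lem: const2} says the constant speed solution attached to such a choice minimizes $\mathcal{I}_q$ within its class and satisfies
\begin{equation*}
\mathcal{I}_q(v,g)^q = \int_0^1 |v(t)|^q\,dt, \quad |v(t)| = \sum_{k\in\EE} |P(t)\partial_t\tilde f(t)+\epsilon(t)|_k.
\end{equation*}
As $(P(t),\epsilon(t))$ varies, $h(t) := P(t)\partial_t\tilde f(t)+\epsilon(t)$ ranges over all edge vectors with $\Omega h(t) = \partial_t f(t)$. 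The time-localized Beckmann identity $\inf_{h:\,\Omega h = \partial_t f(t)}\|h\|_{\ell^1} = |\dot f|_{W_1}(t)$, combined with the constant speed hypothesis $|\dot f|_{W_1}(t) \equiv W_1(f(0),f(1))$, yields $|v(t)| \equiv W_1(f(0),f(1))$ at an optimizing choice, so $\mathcal{I}_q(v,g) = W_1(f(0),f(1))$.

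The main obstacle is the converse, specifically establishing the time-localized Beckmann identity on a graph and selecting $(P(t),\epsilon(t))$ measurably in $t$. The identity follows by passing from the Beckmann formulation of $W_1$ on graphs (see \cite{beckmann1952continuous}, \cite[Section~6.3]{peyre2019computational}) to the infinitesimal regime via the constant speed structure, while measurable selection is standard once lower semicontinuity of the $\ell^1$ minimization in $\partial_t f(t)$ is noted. With both in hand, the pointwise-in-$t$ optimizers glue into a global pair $(v,g)$ that realizes $W_1$ and induces the prescribed geodesic $f$ through Remark~\ref{rmk: sol1}.
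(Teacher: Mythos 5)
Your argument is correct in substance but takes a genuinely different route from the paper in both directions. For the forward implication the paper does not use equality cases of Jensen at all: it shows that the restriction of a minimising $(v,g)$ to any subinterval $[s,t]$ is itself minimising, by a cut-and-paste argument (splice a cheaper competitor on $[s,t]$ into the global pair to contradict optimality), and then evaluates $W_1(f(s),f(t))=\int_s^t |v|\,d\tau$ directly; you instead extract constancy of $h(t)=\sum_k g(t)_k|v(t)_k|$ from the rigidity of Jensen in time and close with the triangle inequality. Your version is cleaner for $q>1$, but note that it degenerates at $q=1$: $\mathcal{I}_1$ is invariant under time reparametrisation, so a minimiser can induce a geodesic that is not constant speed --- this is precisely why the paper's proof begins by swapping $(v,g)$ for a constant speed solution, and you should either restrict the forward claim to $q>1$ or add the same normalisation step. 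For the converse the paper reduces to a spanning tree on which $W_1$ agrees with the tree distance and invokes the tail-based constant speed solution of Lemma \ref{lem: const}, whereas you work directly on the graph through the time-localised Beckmann problem $\inf\{\|h\|_{\ell^1}:\Omega h=\partial_t f(t)\}=|\dot f|_{W_1}(t)$. Your route is more self-contained: it does not need the spanning-tree reduction, and it avoids the point that Lemma \ref{lem: const} requires $f$ to be constant speed for the \emph{tree} metric, which is not immediate from the hypothesis on the graph metric. The price is that you must justify the localised identity and a measurable selection of optimisers; both are routine here, since the Beckmann value is the quotient $\ell^1$-norm on $\mathrm{Im}(\Omega)$ and hence Lipschitz in $\partial_t f(t)$, and the feasible sets are finite-dimensional polyhedra, but you should also record that a constant speed geodesic is Lipschitz in $t$ so that $\partial_t f$ exists almost everywhere, which is all the construction uses.
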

\begin{proof}
    Let $(v,g)$ be a constant speed solution that minimises $I_1$ and $|v_k| = |v| = c$, then induce the path $f(t)$. For an arbitrary subinterval $[s,t]$, the pair $(v,g)$ restricted to $[s,t]$ is still a solution to $\eqref{eq: continuity}$. It can also be shown by contradiction that $(v,g)$ restricted to $[s,t]$ also minimises $I_1$ across $[s,t]$, (see Appendix \ref{appendix: sol}). Since $(v,g)$ is constant speed, we have
    \begin{align*}
        W_1(f(s),f(t)) = \int_s^t |v| d\tau
        &= (t-s)\int_0^1 |v| d\tau \\ &= (t-s)W_1(f(0),f(1))
    \end{align*}
    and so $f$ is a constant speed geodesic. Conversely, suppose $f$ is a constant speed geodesic, so $|\dot{f}|_{W_1} = W_1(f(0),f(1))$. We then use the remarkable result that (if we let $W^{\mathcal{H}}_1$ denote the Wasserstein distance across the graph $\mathcal{H}$) there exists some spanning tree $\TT \subset \GG$ such that $W^{\TT}_1(f(0),f(1)) = W^{\GG}_1(f(0),f(1))$, (see \cite{guy2019kantorovich}). We can choose $(v,g)$ as the constant speed solution on $\EE_{\TT}$ (as in Lemma \ref{lem: const}), and $v = g = 0$ for any $\EE \backslash \EE_{\TT}$. This pair achieves $W_1^{\TT} = W_1^{\GG}$ and since $(f,v,g)$ satisfy the transport equation, $f$ is induced by $(v,g)$.
\end{proof}
\begin{colry}[Convex Interpolations are $W_1$ Geodesics]\label{thm: conv}
    Let $f(0),f(1)$ be two distributions on the vertices of a graph $\GG = (\VV,\EE)$, then $f(t) = (1-t)f(0) + tf(1)$ is a constant speed geodesic.
\end{colry}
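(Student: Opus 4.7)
The plan is to apply Proposition \ref{thm: geod} by exhibiting a $W_1$-achieving pair $(v,g)$ whose induced path, via Equation \eqref{eq: path}, is exactly the convex interpolation $f(t) = (1-t)f(0) + tf(1)$. The key observation is that the linear path has constant time derivative $\partial_t f = f(1) - f(0)$, so a \emph{time-independent} pair $(v,g)$ should suffice, and such a pair can be obtained from any general minimizer by averaging its flux in time.

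First I would invoke Theorem \ref{thm: benamou} with $q = 1$ to obtain a minimizing triple $(f^\star, v^\star, g^\star)$ with $\mathcal{I}_1(v^\star, g^\star) = W_1(f(0), f(1))$. Then I would define the time-averaged flux $m_k := \int_0^1 v^\star(t)_k g^\star(t)_k \, dt$ on each edge. Assuming $f(0) \neq f(1)$ (else the claim is trivial), set $M := \sum_k |m_k| > 0$ and define static objects $v_k := M \, \mathrm{sign}(m_k)$ and $g_k := |m_k|/M$, so that $g$ is a valid probability distribution on $\EE$ and $v_k g_k = m_k$.

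Second, I would verify that $(v,g)$ is admissible and still minimizing. Admissibility through the reduced formulation (Proposition \ref{thm: simple}) holds because
\begin{equation*}
\Omega \int_0^1 v g \, dt = \Omega m = \Omega \int_0^1 v^\star g^\star \, dt = f(1) - f(0).
\end{equation*}
For minimality, since $g^\star_k \geq 0$, the triangle inequality for the integral gives
\begin{equation*}
\mathcal{I}_1(v,g) = \sum_k |m_k| \leq \int_0^1 \sum_k g^\star_k |v^\star_k| \, dt = W_1(f(0), f(1)),
\end{equation*}
so $(v,g)$ is itself a minimizer. Computing the induced path via \eqref{eq: path} is immediate from time-independence:
\begin{equation*}
f(t)_x = f(0)_x + t \sum_k \omega_{x,k} v_k g_k = f(0)_x + t(f(1) - f(0))_x,
\end{equation*}
which is the convex interpolation.

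Finally, applying the first half of Proposition \ref{thm: geod} to the minimizing pair $(v,g)$ yields that its induced path $f(t) = (1-t)f(0) + tf(1)$ is a constant speed $W_1$ geodesic, concluding the proof. I do not expect a serious obstacle here: the main subtlety is the slightly fiddly normalization needed to keep $g$ a probability distribution rather than merely a non-negative measure, and handling the degenerate case $f(0) = f(1)$ (where the constant path trivially satisfies the definition). Everything substantive has already been done in Theorem \ref{thm: benamou} and Proposition \ref{thm: geod}.
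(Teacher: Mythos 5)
Your proof is correct and follows essentially the same route as the paper: both arguments reduce to exhibiting a time-independent minimizing flow $v_kg_k$ on the edges whose induced path via \eqref{eq: path} is the linear interpolation, and then invoke Proposition \ref{thm: geod}. The only difference is that the paper obtains this static flow directly as a minimizer $J$ of the Beckmann formulation, whereas you recover it by time-averaging a dynamic minimizer from Theorem \ref{thm: benamou} --- a slightly longer but equivalent derivation.
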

\begin{proof}
    Let $J$ be a minimising flow in the Beckmann formulation, then we choose, $v_kg_k = J_k$.  This achieves $W_1$ and then $f(t) = \int_0^t \Omega \cdot J d\tau = t(f(1) - f(0)) + f(0)$ is a constant speed $W_1$ geodesic.
\end{proof}
\section{Conclusion}
We have extended the work of Hillion and Johnson (see \cite[Lemma 3.6]{hillion2016}) to demonstrate that the discrete Benamou-Brenier formulation of $W_1$ holds in generality on $\mathbb{Z}$, on trees, and eventually on graphs. The tree case lends itself particularly well, with an exact form for the minimizing triples in terms of tail distributions of constant speed , (see Lemma \ref{thm: tail} and Theorem \ref{thm: benamoutree}).  \newline \newline 
While the minimal Beckmann flow provides us with a convex interpolation (see Corollary \ref{thm: conv}), we see that on $\mathbb{Z}$ and even cycles we have minimizing triples $(f,v,g)$ with more exotic geodesics (see Examples \ref{examp: bin}, \ref{examp: poi} and Example 6 in  Appendix H). In fact the existence of Corollary \ref{thm: conv} and Example \ref{examp: bin} tell us that
\begin{align*}
    f_1(t) &= \bin(n,(1-t)p + tq)\\
    f_2(t) &= (1-t)\bin(n,p) + t\bin(n,q)
\end{align*}
are two valid constant speed geodesics between $\bin(n,p)$ and $\bin(n,q)$. We formalise this characterisation of $W_1$ geodesics in Proposition \ref{thm: geod}.
\newpage
\appendix
\subsection{Proof of Lemma \ref{lem: const} for $q \geq 1$}\label{appendix: lem: const}
\begin{proof}
    Let us compute $|\dot{f}|_{W_1}$ directly, we have that 
    \begin{align*}
        |\dot{f}|_{W_1} &= \lim_{h \rightarrow 0} \frac{W_1(f(t+h),f(t))}{h} \\
        &=\lim_{h \rightarrow 0} \frac{1}{h}\sum_{x \in \VV} |F(t+h)_x - F(t)_x| \\
        &=  \sum_{x \in \VV} \left|\lim_{h \rightarrow 0}\frac{F(t+h)_x - F(t)_x}{h}\right| \\
        &= \sum_{x \in \VV}|\partial_t F(t)_x| =  |v|(t).
    \end{align*}
    Now assume that $f$ is a constant speed geodesic, we know from Definition \ref{defn: constant} that  \begin{equation*}
        W_1(f(t+h),f(t)) = |h| W_1(f(0),f(1)),
    \end{equation*}
     re-arranging and taking limits we have 
    \begin{equation*}
        |v| = |\dot{f}|_{W_1} = \lim_{h \rightarrow 0} \frac{W_1(f(t+h),f(t))}{|h|} = W_1(f(0),f(1)).
    \end{equation*}
    So overall we have
    \begin{align*}
        \mathcal{I}_q(v,g)^q &= \int_0^1 \sum_{k \in \EE}g(t)_kv(t)_k^q dt\\
        &= \int_0^1 \sum_{k \in \EE}g(t)_k|v|^q dt \\
        &= \int_0^1 |v|^q dt \\
        &= |v|^q = W_1(f(0),f(1))^q
    \end{align*}
    and so $\mathcal{I}_q(v,g) = W_1(f(0),f(1))$.
\end{proof}
\subsection{Proof that Convex Interpolation is a Geodesic on a Tree}\label{appendix: geodesic}
\begin{proof}
    Consider a convex interpolation $f(t) = (1-t)f(0) + tf(1)$, then the tail distribution is similarly convex $F(t) = (1-t)F(0) + tF(1)$. Then we directly compute:
    \begin{align*}
        W_1(f(s),f(t)) &= \sum_{x \in \VV}|F(s)_x - F(t)_x| \\
        &=  \sum_{x \in \VV}|(1-s)F(0)_x + sF(1)_x \\
        &\quad \quad \quad - (1-t)F(0)_x - tF(1)_x| \\
         &= \sum_{x \in \VV}|(t-s)(F(1)_x - F(0)_x)|\\
         &= |t-s|  \sum_{x \in \VV}|F(1)_x - F(0)_x| \\
         &= |s-t|W_1(f(0),f(1)).
    \end{align*}
    So $f$ is a constant speed geodesic.
\end{proof}
\subsection{Proof of Theorem  \ref{thm: benamoutree} for $q \geq 1$}\label{appendix: thm: benamoutree}
\begin{proof}
    An immediate consequence of \eqref{eq: tree} and \eqref{eq:tailpde} is we are given an integral formulation of $W_1$. Remarkably, if $(f,v,g)$ is \textit{any} triple which satisfies \eqref{eq: original}, then we can write
    \begin{align*}
        W_1(f(0),f(1) &= \sum_{x \in \VV} | F(1)_x - F(0)_x| \\
        &= \sum_{x \in \VV} \left| \int_0^1 \partial_t F(t)_x dt \right| \\
        &= \sum_{k \in \EE} \left| \int_0^1 g(t)_kv(t)_k dt \right|
    \end{align*}
    which transforms a sum over tails to a sum over edges. Although this goes further, by applying both Jensen's and the Cauchy Schwarz inequalities, we can bound $W_1$ above by $V_2$.
    \begin{align*}
        W_1(f(0),f(1))^q &= \left( \sum_{k \in \EE} \left| \int_0^1 g(t)_kv(t)_kdt \right|\right)^q \\
        &\leq \left( \int_0^1 \sum_{k \in \EE} |g(t)_kv(t)_k| dt \right)^q\\
        &\leq \int_0^1 \left( \sum_{k \in \EE} |g(t)_kv(t)_k|\right)^q dt \\
        &\leq \int_0^1 \sum_{k \in \EE} g(t)_k|v(t)_k|^q dt  = \mathcal{I}_q(v,g)^q
    \end{align*}
    So we have that $V_q \geq W_1$ too. However, Lemma \ref{lem: const}, says that a constant speed solution achieves $W_1$, so we in fact have that $V_q(f(0),f(1)) = W_1(f(0),f(1))$ and we are done.
\end{proof}
\subsection{Proof of Lemma \ref{lem: const2}}\label{appendix: lem: const2}
\begin{proof}
    Let $(v,g)$ be a solution to \eqref{eq: reduced1}, we induce a path $f$, and then since $(f,v,g)$ satisfy the transport equation, for some $(P,\epsilon)$ we have $vg = P \cdot \partial_t f + \epsilon$. Then define $(v^*,g^*)$ by 
    \begin{align*}
        v^*_k &= \text{sign}(P\cdot \partial_t f + \epsilon)_k|v^*| \\
        g^*_k &= |P \cdot \partial_t f + \epsilon|_k/ |v^*| \\
        |v^*| &= \sum_{k \in \EE} |P \cdot \partial_t f + \epsilon|_k
    \end{align*}
    then we know $\int_0^1 v^*g^* dt = P \cdot (f(1) - f(0)) + \epsilon$ and so $\int_0^1 \Omega v^*g^* dt = f(1) - f(0)$. Finally we have
    \begin{align*}
        \mathcal{I}_q(v,g)^q &= \int_0^1 \sum_{k \in \EE} g_k|v_k|^q dt \\
        &\geq \int_0^1 \left( \sum_{k \in \EE} |g_kv_k|\right)^q dt \\
        &= \int_0^1 \left( \sum_k |P \cdot \partial_t f + \epsilon|_k \right)^q dt \\
        &= \int_0^1 |v^*|^q dt \\
        &= \int_0^1 \sum_k g^*_k |v_k^*|^q dt = \mathcal{I}_q(v^*,g^*)^q.
    \end{align*}
    Now consider an increasing function $\phi: [0,1] \rightarrow [0,1]$ such that $\phi(0) = 0$ and $\phi(1) = 1$ and the following pair:
    \begin{align*}
        v^{**}(t)_k =\phi'(t) \cdot v^{*}(\phi(t))_k, \quad g^{**}(t)_k = g^{*}(\phi(t))_k.
    \end{align*}
    By integration by substitution we know 
    \begin{equation*}
        \int_0^1 \Omega \cdot v^{**}g^{**}dt = \int_0^1 \Omega\ \cdot v^*g^* dt = f(1) - f(0).
    \end{equation*}
    So $(v^{**},g^{**})$ is a valid solution. We consider when 
    \begin{equation}\label{eq: diff}
        \left|\phi'(t) \frac{d v^*}{d\phi} \right| =\phi'(t) \left| \frac{dv^*}{d\phi}\right|= c.
    \end{equation}
    We denote $s(\phi) = \left| \frac{dv^*}{d\phi}\right|$ and $S(y) = \int_0^y s(y) dy$ then solving \eqref{eq: diff} we have $S(\phi(t)) = ct+d$. We know that  $c  = S(\phi(1)) - S(\phi(0))= S(1) - S(0)$ and $d = S(\phi(0))= S(0) $. So the choice of 
    \begin{equation*}
        \phi(t) \coloneqq S^{-1}[(S(1)-S(0))t + S(0)]
    \end{equation*}
    gives us $|v^{**}| = c = S(1) - S(0) = \int_0^1 s(t) dt$. Note that $S^{-1}$ exists because $s(y) > 0$ so $S$ is increasing. This parametrisation has time-independent speed and by the chain rule and Jensen's inequality we have 
    \begin{align*}
        \mathcal{I}_q(v^{**},g^{**})^q &=\int_0^1 |v^{**}|^q dt \\
        &= |v^{**}|^q \\
        &= \left( \int_0^1 |v^{**}| dt \right)^{q}\\
        &= \left( \int_0^1 \phi'(t) |v^*(\phi(t))| dt \right)^q \\
        &=\left(\int_0^1 |v^{*}(s)| ds\right)^q \\
        &\leq \int_0^1 |v^{*}(s)|^q ds = \mathcal{I}_q(v^*,g^*)^q
    \end{align*}
    
\end{proof}
\subsection{Proof of Theorem \ref{thm: benamou}}\label{appendix: thm: benamou}
\begin{proof}
    We first use the reduced form for $V_q$:
    \begin{equation*}
        V_q(f(0),f(1))= \inf_{v,g} \left\{ \mathcal{I}_q(v,g) : f(1) - f(0) = \Omega \textstyle \int_0^1 gv dt \right\}
    \end{equation*}
    We can express the Beckmann formulation of the $W_1$ distance as follows:
    \begin{equation*}
        W_1(f(0),f(1)) = \inf_J \left\{\sum_{k \in \EE} |J_k| : \Omega \cdot J = f(1) - f(0) \right \}
    \end{equation*}
    We begin by showing that $V_q \geq W_1$, let $(v,g)$ satisfy
    \begin{equation*}
        \Omega \cdot \int_0^1 gv dt = f(1) - f(0).
    \end{equation*}
    We see that identifying $J \coloneqq \int_0^1 gv dt$ gives a valid solution to Beckmann's formulation. Then we have that 
    \begin{align*}
        \mathcal{I}_q(v,g)^q &= \int_0^1 \sum_{k} g(t)_k|v(t)_k|^q dt \\
        &\geq \int_0^1 \left( \sum_k |g(t)_kv(t)_k| \right)^q dt \\
        &\geq \left( \int_0^1 \sum_k |g(t)_kv(t)_k| dt\right)^q\\
        &\geq \left( \sum_k \int_0^1 |g(t)_kv(t)_k| dt \right)^q \\
        &\geq \left(\sum_k \left|\int_0^1 g(t)_kv(t)_k  dt\right| \right)^q\\
        &= \left( \sum_k |J_k|\right)^q \\
        &\geq W_1(f(0),f(1))^q
    \end{align*}
    So we have that $V_q \geq W_1$. Conversely, choose $J$ a minimising solution (achieves $W_1$) to the Beckmann formulation, then we can choose 
    $|v| = \sum_{k} |J_k|$, $v(t)_k = \text{sign}(J_k)|v|$ and $g(t)_k = |J_k|/|v|$, which satisfies $g(t)_k v(t)_k = J_k$. Additionally this pair $(v,g)$ is constant in time, so we have that 
    \begin{equation*}
        \Omega \cdot \int_0^1 gv dt = \Omega \cdot (gv) = \Omega \cdot J = f(1) - f(0).
    \end{equation*}
    So $(v,g)$ satisfy the condition for $V_q$, then we have that 
    \begin{align*}
        W_1(f(0),f(1))^q &= \left(\sum_{k}|J_k|\right)^q \\
        &= |v|^q \\
        &= \sum_k g(t)_k|v|^q \\
        &= \sum_k g(t)_k |v(t)_k|^q \\
        &= \int_0^1 \sum_k g(t)_k |v(t)_k|^q dt \\
        &\geq V_q(f(0),f(1))^q
    \end{align*}
    So we additionally have $W_1 \geq V_q$, so overall we have that for any $f(0)$ and $f(1)$, we have 
    \begin{equation*}
        W_1(f(0),f(1)) = V_q(f(0),f(1)).
    \end{equation*}
\end{proof}
\subsection{Proof of Local Minimisation}\label{appendix: sol}
\begin{proof}
    Suppose $(v,g)$ is a minimising pair for $I_1$ across $[0,1]$, i.e we have that 
    \begin{align*}
        \mathcal{I}_1(v,g) = \int_0^1 \sum_{k \in \EE}v_kg_k dt &= \min_{(v_0,g_0)} \mathcal{I}_1(v_0,g_0) \\
        &= W_1(f(0),f(1)).
    \end{align*}
    Then we denote
    \begin{equation*}
        \mathcal{I}_q^{[s,t]}(v,g) = \left(\int_s^t \sum_{k \in \EE} g(t)_k |v(t)|_k^q dt \right)^{\frac{1}{q}}.
    \end{equation*}
    This implies that $\mathcal{I}_1^{[0,1]} = \mathcal{I}_1$. Suppose $(v,g)$ is not minimal across some $[s,t]$ for $\mathcal{I}_1$, then there exists some $(v^*,g^*)$ such that 
    \begin{equation*}
        \mathcal{I}_1^{[s,t]}(v^*,g^*) < \mathcal{I}_1^{[s,t]}(v,g).
    \end{equation*}
    From here we construct a new solution across $[0,1]$:
    \begin{equation*}
        (v^{**}(\tau)_k,g^{**}(\tau)_k) = \begin{cases}
            (v^*(\tau)_k,g^*(\tau)_k) &: \tau \in [s,t] \\
            (v(\tau)_k,g(\tau)_k) &: \tau \not \in [s,t]
        \end{cases}
    \end{equation*}
    Then by decomposing
    \begin{align*}
        \mathcal{I}_1^{[0,1]} = \mathcal{I}_1^{[0,s]} + \mathcal{I}_1^{[s,t]} + \mathcal{I}_1^{[t,1]} 
    \end{align*}
    we have that 
    \begin{align*}
        \mathcal{I}_1^{[0,1]}(v^{**},g^{**}) &< \mathcal{I}_1^{[0,s]}(v,g) + \mathcal{I}_1^{[s,t]}(v,g) + \mathcal{I}^{[t,1]}(v,g) \\
        &= \mathcal{I}_1^{[0,1]}(v,g)
    \end{align*}
    which contradicts the minimality of $(v,g)$.
\end{proof}
\subsection{Supplementary Results}\label{appendix: supp}
The following results are true by virtue of $V_q = W_1$, although contain useful insights nevertheless, including a proof that Wasserstein-$1$ is bounded below by the total variation.
\begin{lem} $V_q(f(0),f(1)) \geq \text{TV}(f(0),f(1))$ for all $q \geq 1$, where $\text{TV}$ is the total variational distance.
\end{lem}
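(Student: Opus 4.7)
The plan is to work directly from the Reduced Formulation (Proposition \ref{thm: simple}) rather than invoking the identity $V_q = W_1$, so that the bound falls out transparently from the structure of the admissible flows. Given any $(v,g)$ feasible for $V_q$, I would introduce the time-integrated flow
\begin{equation*}
J_k \coloneqq \int_0^1 v(t)_k g(t)_k \, dt,
\end{equation*}
which by \eqref{eq: reduced1} satisfies $\Omega \cdot J = f(1) - f(0)$. The argument then factors into two inequalities: $\mathcal{I}_q(v,g) \geq \sum_k |J_k|$ and $\sum_k |J_k| \geq \text{TV}(f(0),f(1))$.

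For the first inequality I would recycle the chain of Jensen's inequalities already used in the proof of Theorem \ref{thm: benamou}: Jensen in the $k$ variable against the probability measure $g(t)$ on $\EE$, then Jensen in the $t$ variable against the uniform measure on $[0,1]$, and finally pulling the absolute value inside the time integral. For the second inequality -- the piece of actual new content -- I would exploit the combinatorial structure of the incidence matrix. The constraint $\Omega \cdot J = f(1) - f(0)$ reads $f(1)_x - f(0)_x = \sum_k \omega_{x,k} J_k$ at each vertex, so by the triangle inequality $|f(1)_x - f(0)_x| \leq \sum_k |\omega_{x,k}||J_k|$. Summing over $x$ and swapping sums, every edge contributes $|J_k|$ to exactly its two endpoints (where $|\omega_{x,k}|=1$), yielding $\sum_x |f(1)_x - f(0)_x| \leq 2 \sum_k |J_k|$. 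With the standard convention $\text{TV}(f(0),f(1)) = \tfrac{1}{2}\sum_x |f(1)_x - f(0)_x|$, this rearranges to $\sum_k |J_k| \geq \text{TV}(f(0),f(1))$.

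Chaining the two inequalities gives $\mathcal{I}_q(v,g) \geq \text{TV}(f(0),f(1))$ for every feasible pair $(v,g)$, and taking the infimum yields the claim. No step looks substantively hard; the main bookkeeping concern is keeping the sign convention for $\Omega$ consistent with the divergence operator and remembering that $g(t)$ is a probability measure on $\EE$ at each $t$, which is what licenses the first Jensen step. The appeal of the proof is that it exposes the TV bound as an inherently Beckmann-flavoured statement about flows rather than a consequence of $V_q = W_1$ combined with classical Kantorovich--Rubinstein duality.
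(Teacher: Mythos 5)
Your proposal is correct and follows essentially the same route as the paper's own proof: the same Jensen/triangle-inequality chain to bound $\mathcal{I}_q(v,g)$ below by the time-integrated absolute flow, and the same counting observation that each edge is incident to exactly two vertices, producing the factor of $2$ that matches $\text{TV}=\tfrac{1}{2}\sum_x|f(1)_x-f(0)_x|$. The only cosmetic difference is that you factor the bound explicitly through the Beckmann quantity $\sum_k|J_k|$ with $J_k=\int_0^1 v_k g_k\,dt$, whereas the paper keeps the time integral in place and bounds $\sum_x|f(1)_x-f(0)_x|\leq 2\int_0^1\sum_k|g_kv_k|\,dt$ directly.
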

\begin{proof}
   Consider the reduced form of $V_q$, and let $(v,g)$ satisfy
   \begin{equation*}
       \Omega \cdot \int_0^1 vg dt = f(1) - f(0).
   \end{equation*}
   Writing this out explicitly we have
   \begin{align*}
       |f(1)_x - f(0)_x| &= \left|\int_0^1 \sum_{k \rightarrow x} g(t)_kv(t)_k - \sum_{x \rightarrow k}g(t)_kv(t)_k dt \right| \\
       &\leq \int_0^1 \left|\sum_{k \rightarrow x}g(t)_kv(t)_k - \sum_{ x \rightarrow k}g(t)_k v(t)_k \right| dt \\
       &\leq \int_0^1 \sum_{k \rightarrow x}|g(t)_kv(t)_k| + \sum_{x \rightarrow k} |g(t)_kv(t)_k| dt.
   \end{align*}
   But considering the identity
   \begin{align*}
       \sum_x \sum_{k:k \rightarrow x}h_k + \sum_x \sum_{k:x \rightarrow k} h_k = 2\sum_k h_k
   \end{align*}
   we have that 
   \begin{align*}
       \sum_x |f(1)_x - f(0)_x| &\leq 2\int_0^1 \sum_k |g(t)_k v(t)_k| dt.
   \end{align*}
   Finally we have that 
   \begin{align*}
       \mathcal{I}_q(v,g)^q &\geq \int_0^1 \sum_k g(t)_k |v(t)_k|^q dt \\
       &\geq \int_0^1 \left(\sum_k |g(t)_kv(t)_k| \right)^q dt\\
       &\geq \left( \int_0^1 \sum_k |g(t)_kv(t)_k| \right)^q \\
       &\geq \left(\frac{1}{2}\sum_x |f(1)_x - f(0)_x| \right)^q = \text{TV}(f(0),f(1))^q.
   \end{align*}
   So $V_q \geq \text{TV}$ too. 
\end{proof}
\begin{prop}$V_q$ defines a metric on $\mathcal{P}(\VV)$.
\end{prop}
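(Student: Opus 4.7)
The plan is to invoke Theorem \ref{thm: benamou} as the workhorse: since that result establishes $V_q(f(0),f(1)) = W_1(f(0),f(1))$ on any connected graph for every $q \geq 1$, and $W_1$ is already a metric on $\mathcal{P}(\VV)$ (by \cite[Definition 6.4]{villani2008optimal}, as noted just after Definition \ref{defn: kant}), the four metric axioms transfer to $V_q$ automatically. So the proof can be essentially a one-liner.

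For expository clarity, I would nonetheless sketch how each axiom manifests directly at the level of the variational problem defining $V_q$, before citing Theorem \ref{thm: benamou} for the properties that are genuinely nontrivial. Non-negativity is immediate from the non-negativity of the integrand $g_k |v_k|^q$ in \eqref{eq: I}. Symmetry $V_q(f(0),f(1)) = V_q(f(1),f(0))$ follows by the time reversal $t \mapsto 1-t$ combined with $v \mapsto -v$: this maps admissible triples for one pair bijectively to admissible triples for the reversed pair and preserves $\mathcal{I}_q$ since only $|v|^q$ enters. The easy half of identity of indiscernibles, $V_q(f,f) = 0$, comes from choosing $v \equiv 0$ (with any admissible $g$), which trivially satisfies \eqref{eq: matrix} and gives $\mathcal{I}_q = 0$.

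The two remaining properties, namely the converse $V_q(f(0),f(1)) = 0 \Rightarrow f(0) = f(1)$ and the triangle inequality, are the places where a self-contained argument would need real work: the triangle inequality would require splicing two admissible triples at a shared intermediate distribution with a time reparametrization (concatenating $[0,1/2]$ and $[1/2,1]$ blocks and accounting for the resulting factor under $\mathcal{I}_q$). Rather than grind through this, I would simply appeal to Theorem \ref{thm: benamou} and the known metric structure of $W_1$, which is the cleanest route and also the main obstacle sidestepped. This is consistent with the introductory remark before the statement, which explicitly frames the proposition as a corollary of the identification $V_q = W_1$.
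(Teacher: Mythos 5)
Your proof is correct: Theorem \ref{thm: benamou} gives $V_q = W_1$ with no circularity, and the metric axioms transfer; indeed the paper itself concedes in the appendix that this proposition is ``true by virtue of $V_q = W_1$.'' However, the paper takes a genuinely different route: it gives a self-contained variational proof that never invokes $V_q = W_1$. Positive-definiteness is obtained from the preceding lemma $V_q \geq \text{TV}$ (so $V_q(f(0),f(1))=0$ forces $f(0)=f(1)$ because TV is a metric); symmetry uses exactly your time reversal $\tilde v(t) = -v(1-t)$, $\tilde g(t) = g(1-t)$, but applied to minimizing sequences; and the triangle inequality concatenates admissible pairs for $(f(0),f^*)$ and $(f^*,f(1))$ on $[0,\rho]$ and $[\rho,1]$ with velocities rescaled by $1/\rho$ and $1/(1-\rho)$, yielding $\mathcal{I}_q(\hat v,\hat g)^q = \frac{1}{\rho}\mathcal{I}_q(v,g)^q + \frac{1}{1-\rho}\mathcal{I}_q(\tilde v,\tilde g)^q$, and then \emph{optimizes} $\rho$ so this equals $\left(\mathcal{I}_q(v,g)+\mathcal{I}_q(\tilde v,\tilde g)\right)^q$. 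Note that your sketched splice at $t=1/2$ would not suffice for $q>1$ --- the free parameter $\rho$ is the essential trick --- though this does not affect your actual proof since you ultimately appeal to the theorem. What your route buys is brevity; what the paper's route buys is an argument independent of the Beckmann machinery underlying Theorem \ref{thm: benamou}, which is presumably why the authors present it despite acknowledging the one-line alternative.
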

\begin{proof}
    We begin with positive definiteness, we know $\mathcal{I}_q(v,g) \geq 0$ for all $(v,g)$, so $V_q \geq 0$. Additionally if $V_q(f(0),f(1)) = 0$, then $\text{TV}(f(0),f(1)) = 0$ too, so $f(0) = f(1)$ since the total variation is a metric. Now we prove symmetry. For any pair $(v,g)$ which satisfies
    \begin{equation*}
        \Omega \cdot \int_0^1 gv dt = f(1) - f(0)
    \end{equation*}
    we can build a pair $(\tilde{g},\tilde{v})$ by $\tilde{g}(t) = g(1-t)$ and $\tilde{v}(t) = -v(1-t)$ which satisfies \begin{align*}
        \Omega \cdot \int_0^1 \tilde{g}\tilde{v} dt &= -\Omega \cdot \int_0^1 g(1-t)v(1-t) dt \\
        &= -\Omega \cdot \int_0^1 g(t)v(t) dt \\
        &= f(0) - f(1).
    \end{align*}
    So $(\tilde{g},\tilde{v})$ is valid for $V_q(f(1),f(0))$. Similarly we can see that $\mathcal{I}_q(\tilde{v},\tilde{g}) = \mathcal{I}_q(v,g)$ too. If we consider a sequence $(g^{(n)},v^{(n)})_n$ such that $\lim_{n \rightarrow \infty} \mathcal{I}_q(v^{(n)},g^{(n)}) = V_q(f(0),f(1))$, then we induce a sequence $(\tilde{v}^{(n)},\tilde{g}^{(n)})$ too, then
    \begin{align*}
        V_q(f(0),f(1)) &= \lim_{n \rightarrow \infty}\mathcal{I}_q(v^{(n)},g^{(n)})\\
        &=\lim_{n \rightarrow \infty} \mathcal{I}_q(\tilde{v}^{(n)},\tilde{g}^{(n)}) \\
        &\geq V_q(f(1),f(0)).
    \end{align*}
    We can perform this process in reverse too, so $V_q(f(0),f(1)) = V_q(f(1),f(0))$. Finally we tackle the triangle inequality. We consider $f(0),f^*$ and $f(1)$, then let $(v,g)$ be a pair between $f(0)$ and $f^*$ and $(\tilde{v},\tilde{g})$ be a pair between $f^*$ and $f(1)$, then for some $\rho \in [0,1]$ we define a piecewise solution
    \begin{equation*}
    \hat{v}(t),\hat{g}(t) = 
        \begin{cases}
            \frac{1}{\rho}v\left(\frac{t}{\rho}\right),g\left(\frac{t}{\rho}\right) &: t \in [0,\rho] \\
            \frac{1}{1-\rho}\tilde{v}\left(\frac{t- \rho}{1- \rho} \right),\tilde{g}\left(\frac{t-\rho}{1-\rho}\right) &: t \in [\rho,1]
        \end{cases}
    \end{equation*}
    It can be shown that 
    \begin{equation*}
        \mathcal{I}_q(\hat{v},\hat{g})^q = \frac{1}{\rho}\mathcal{I}_q(v,g)^q + \frac{1}{1 - \rho}\mathcal{I}_q(\tilde{v},\tilde{g})^q
    \end{equation*}
    and we choose $\rho$ such that 
    \begin{equation*}
        \mathcal{I}_q(\hat{v},\hat{g})^q = \left(\mathcal{I}_q(v,g) + \mathcal{I}_q(\tilde{v},\tilde{g}) \right)^q
    \end{equation*}
    and hence 
    \begin{equation*}
         \mathcal{I}_q(\hat{v},\hat{g}) = \mathcal{I}_q(v,g) + \mathcal{I}_q(\tilde{v},\tilde{g}).
    \end{equation*}
    Then if we let $(g^{(n)},v^{(n)})_n$ and $(\tilde{g}^{(n)},\tilde{v}^{(n)})_n$ be sequences such that $\mathcal{I}_q(g^{(n)},v^{(n)}) \rightarrow V_q(f(0),f^*)$ and $\mathcal{I}_q(\tilde{g}^{(n)},\tilde{v}^{(n)}) \rightarrow V_q(f^*,f(1))$ respectively, then 
    \begin{align*}
        V_q(f(0),f(1)) &\leq \lim_{n \rightarrow \infty} \mathcal{I}(\hat{v}^{(n)},\hat{g}^{(n)}) \\
        &= V_q(f(0),f^*) + V_q(f^*,f(1))
    \end{align*}
    and we have the triangle inequality. Finally we motivate why we can choose $\rho$ in such a way. We are essentially choosing $\rho$ such that 
    \begin{equation*}
        \frac{1}{\rho}a^q + \frac{1}{1-\rho}b^q = (a+b)^q.
    \end{equation*}
    This is a quadratic in $\rho$ which has valid solutions in $[0,1]$.
\end{proof}
The following is a worked example for finding valid minimizing solutions on a $4$-cycle.
\begin{exam}\label{examp: square}
 Consider the following distribution $f$ on a square generated by two probabilities $p$ and $q$, we consider the Wasserstein distance between when $p = p_0,q = q_0$ and $p = p_1,q = q_1$ respectively.
    \begin{center}
        \centering
        \begin{tikzpicture}
  \def\spacing{2}  

  \coordinate (P0) at (0, 0);
  \coordinate (P1) at (\spacing, 0);
  \coordinate (P2) at (\spacing, \spacing);
  \coordinate (P3) at (0, \spacing);

  \foreach \i in {0,1,2,3} {
    \filldraw[black] (P\i) circle (2pt);
  }
  \node[left] at (P0) {$pq$};
  \node[right] at (P1) {$q(1-p)$};
  \node[right] at (P2) {$(1-p)(1-q)$};
  \node[left] at (P3) {$p(1-q)$};
  \node[below] at (P0) {$0$};
  \node[below] at (P1) {$1$};
  \node[above] at (P2) {$2$};
  \node[above] at (P3) {$3$};
  \draw (P0) -- (P1) -- (P2) -- (P3) -- cycle;
\end{tikzpicture}
    \end{center}
Since these are product measures, we can show that $W_1(f(0),f(1)) = |p_1 - p_0| + |q_1 - q_0|$. We can express the transport equation as:
\begin{align*}
    \partial_t f(t)_0 &= p'q + pq' &&=  -g_{01}v_{01} - g_{03}v_{03} \\
    \partial_t f(t)_1 &= q'(1-p) - p'q &&= \quad g_{01}v_{01} - g_{12}v_{12}\\
    \partial_t f(t)_2 &= -p'(1-q) - q'(1-p) &&= \quad  g_{12}v_{12} + g_{32}v_{32} \\
    \partial_t f(t)_3 &= p'(1-q) - pq' &&=\quad g_{03}v_{03} - g_{32}v_{32}
\end{align*}
Since we have one cycle, we know $\text{Nullity}(\Omega) =1$, so we have one degree of freedom in solutions which are written as
\begin{align*}
    g_{01}v_{01} &=-p'q + \epsilon \\
    g_{03}v_{03}&= -pq' - \epsilon\\
    g_{12}v_{12} &= q'(1-p) - \epsilon\\
    g_{32}v_{32}&= p'(1-q) + \epsilon 
\end{align*}
We choose a specific constant speed solution $(v,g)$ by assigning $|v| = |p'| + |q'|$, then defining the velocity $v$ by:
\begin{align*}
    v_{01} = v_{32} = \text{sign}(p')|v|, \quad v_{03} = v_{12} = \text{sign}(q')|v|
\end{align*}
and defining the edge distribution $g$ by
\begin{align*}
 |v|g_{01} &= |p'|q, &\quad |v|g_{32} &= |p'|(1-q)\\
    |v|g_{03} &= |q'|p, &\quad |v|g_{12} &= |q'|(1-p) \\
\end{align*}
When $p'$ and $q'$ are constant, $|v|$ is additionally constant in time -- as well as edge-wise. This constant speed solution achieves $W_1$, so by Proposition \ref{thm: geod}, the induced path $f(t)$ is a geodesic. This corresponds to interpolating $p(t) = (1-t)p_0 + tp_1$ and $q(t) = (1-t)q_0 + tq_1$.
\end{exam}
\newpage 
\bibliography{refs}
\end{document}